\newtheorem{theorem}{Theorem}[section]
\newtheorem{lemma}{Lemma}[section]
\newtheorem{remark}{Remark}[section]
\newtheorem{example}{Example}[section]
\newtheorem{defn}{Definition}[section]
\newtheorem{proposition}{Proposition}[section]
\numberwithin{equation}{section}
\numberwithin{figure}{section}
\begin{document}

\newcommand{\E}{\mathbb{E}}
\newcommand{\Eof}[1]{\mathbb{E}\left[ #1 \right]}
\newcommand{\tE}{\tilde{\mathbb{E}}}
\newcommand{\tEof}[1]{\tilde{\mathbb{E}}\left[ #1 \right]}
\renewcommand{\P}{\mathbb{P}}
\newcommand{\PP}{\mathbb{P}}
\newcommand{\tP}{\tilde{\mathbb{P}}}
\newcommand{\tr}{{\rm tr}}
\newcommand{\tW}{\tilde{W}}
\newcommand{\cF}{\mathcal{F}}
\newcommand{\R}{\mathbb{R}}
\newcommand{\sigl}{\sigma_\ell}
\newcommand{\BS}{\rm BS}
\newcommand{\p}{\partial}
\newcommand{\var}{{\rm var}}
\newcommand{\cov}{{\rm cov}}
\newcommand{\beaa}{\begin{eqnarray}}
\newcommand{\eeaa}{\end{eqnarray}}
\newcommand{\bea}{\begin{eqnarray}}
\newcommand{\eea}{\end{eqnarray}}
\newcommand{\ben}{\begin{enumerate}}
\newcommand{\een}{\end{enumerate}}
\newcommand{\beq}{\begin{equation}}
\newcommand{\eeq}{\end{equation}}

\newcommand{\bs}{\mathbf{s}}
\newcommand{\e}{\varepsilon}
\newcommand{\Q}{\mathbb{Q}}
\newcommand{\sech}{\mathrm{sech}}

%
%
%

\begin{titlepage}

\begin{center}
 \large \bf Most-likely-path in Asian option pricing under local volatility models
\end{center}
\vspace{0.5cm}

\begin{center}
  Louis-Pierre Arguin \\
  Department of Mathematics, Baruch College, CUNY \\
  1 Bernard Baruch Way, New York, NY10010 \\
  e-mail: \textsf{louis-pierre.arguin@baruch.cuny.edu}
\end{center}

\vskip5mm

\begin{center}
  Nien-Lin Lu \\
  Department of Mathematical Sciences, Ritsumeikan University \\
  Noji-higashi 1-1-1, Kusatsu, Shiga 525-8577, Japan \\
  e-mail: \textsf{nienlin1126@gmail.com}
\end{center}

\vskip5mm

\begin{center}
  Tai-Ho Wang \\
  Department of Mathematics, Baruch College, CUNY \\
  1 Bernard Baruch Way, New York, NY10010 \\
  e-mail: \textsf{tai-ho.wang@baruch.cuny.edu}
\end{center}

\vskip1cm
\begin{center}
 {\bf Abstract}
 \end{center}
\vskip 0.7cm

This article addresses the problem of approximating the price of options on discrete and continuous arithmetic averages of the underlying, i.e., discretely and continuously monitored Asian options, in local volatility models.
A ``path-integral"-type expression for option prices is obtained using a Brownian bridge representation for the transition density between consecutive sampling times and a Laplace asymptotic formula. In the limit where the sampling time window approaches zero,
the option price is found to be approximated by a constrained variational problem on paths in time-price space.
We refer to the optimizing path as the {\it most-likely path} (MLP).
An approximation for the implied normal volatility follows accordingly.
The small-time asymptotics and the existence of the MLP are also  rigorously recovered using large deviation theory.

\vskip 5mm

\noindent {\it Keywords}: Asian option pricing, Asymptotic expansion, Exotic option, 
Large deviation theory, Most likely path

%

\vskip 1cm



\end{titlepage}



\title[Most-likely-path in Asian option pricing]{Most-likely-path in Asian option pricing under local volatility models}

\author[L.-P. Arguin, N.-L. Liu, and T.-H. Wang]{Louis-Pierre Arguin, Nien-Lin Liu, and Tai-Ho Wang}

\address{Louis-Pierre Arguin \newline
Department of Mathematics, Baruch College, CUNY \newline
1 Bernard Baruch Way, New York, NY10010
}
\email{louis-pierre.arguin@baruch.cuny.edu}

\address{Nien-Lin Liu \newline
Department of Mathematical Sciences, Ritsumeikan University \newline
Noji-higashi 1-1-1, Kusatsu, Shiga 525-8577, Japan
}
\email{nienlin1126@gmail.com}

\address{Tai-Ho Wang \newline
Department of Mathematics \newline
Baruch College, The City University of New York \newline
1 Bernard Baruch Way, New York, NY10010 
}
\email{tai-ho.wang@baruch.cuny.edu}

\maketitle


\begin{abstract}
This article addresses the problem of approximating the price of options on discrete and continuous arithmetic averages of the underlying, i.e., discretely and continuously monitored Asian options, in local volatility models.
A ``path-integral"-type expression for option prices is obtained using a Brownian bridge representation for the transition density between consecutive sampling times and a Laplace asymptotic formula. In the limit where the sampling time window approaches zero,
the option price is found to be approximated by a constrained variational problem on paths in time-price space.
We refer to the optimizing path as the {\it most-likely path} (MLP).
An approximation for the implied normal volatility follows accordingly.
The small-time asymptotics and the existence of the MLP are also  rigorously recovered using large deviation theory.
\end{abstract}

\keywords{Asian option pricing; asymptotic expansion; exotic option; large deviation theory; most-likely-path.}

\renewcommand{\O}{\mathcal{O}}
\newcommand{\dt}{\Delta t}
\newcommand{\mt}{\mathbf{t}}
\allowdisplaybreaks

%
%

\section{Introduction}

Asian options, also known as average price options, are among the most liquidly traded exotic options in commodities such as agriculture, energy, and fixed incomes markets. Nowadays average price options represent a high percentage of options on oil; some are directly on the futures contracts of oil, while others on spreads between two types of oil futures.
Asian options are also commonly used as a risk management vehicle for, owing to its averaging feature, a) the underlying average price is more difficult to manipulate; b) the average price is less sensitive to abrupt shocks; and c) such options are cheaper than similar vanilla options.

There is a rich literature on the problem of pricing Asian options, in part due to the difficulty of finding analytical solutions even for simpler cases like the Black-Scholes model.
We mention only the following few and refer the interested readers to the references therein.
To our knowledge, \cite{kemna-vorst} was the first published work tackling the problem of Asian option pricing in Black-Scholes model. As numerically pricing Asian options by Monte Carlo simulations is concerned, \cite{kemna-vorst} also introduced a variance reduction technique by using the price of geometric average option, whose analytic form is readily obtained, as a control variate. See also \cite{curran} for further analysis and extension to portfolio options on the technique of conditioning. Further development and improvement of Monte Carlo schemes since then were followed up by \cite{broadie-glasserman} (for a direct method of estimating the Greeks of an Asian option by Monte Carlo simulation, see Section 4.2 on P.275), \cite{vd} (combining control variate and change of measure/likelihood ratio), and \cite{ghs} (importance sampling), etc.
Attempts to find closed or semi-closed form expressions for Asian option pricing in Black-Scholes models first appeared in the seminal work of \cite{gy}. Among other interesting results in \cite{gy}, the Laplace transform of the Asian option price with respect to time to maturity is derived and has been known as the celebrated Geman-Yor formula. An extension of the Geman-Yor approach to a jump diffusion model can be found in \cite{cai-kou}. However, numerical inversion of the Geman-Yor formula was shown slow and needed to be handled with care, see for example the discussions in \cite{dufresne} and \cite{fmw}.
Analytical approximation of the risk neutral density for the average price, which in turn yields an approximation of the Asian option, under Black-Scholes model dates back to the work of \cite{turnbull-wakeman} and \cite{ritchken}. Both papers applied the Edgeworth expansion to the density of discretely monitored average price around lognormal distribution and obtained Black-Scholes type formula (up to correction terms) for the price of Asian option. Such analytical approximations are more appealing in practice than the Monte Carlo solutions because explicit expressions for the Greeks are readily accessible.
Finally, PDE method is pursued by \cite{rogers} and \cite{vercer1} (see also \cite{vercer2} for a more recent development) after an ingeniously chosen change of variable.

Literatures on the pricing of Asian options under more general dynamics for the underlying security such as local volatility or stochastic volatility models, as opposed to those in the Black-Scholes model, are comparatively little. Less ambitious approaches to the pricing of Asian options include arbitrage free bounds as in \cite{albrecher} and \cite{rogers}, and also approximative and asymptotic solutions such as \cite{dassios}, \cite{fpp}, and \cite{ppp}. Approximations resorting to asymptotic expansions are mostly It\^o-Taylor type expansion based as originated from the work of \cite{kunitomo-takahashi}, see also \cite{cai-li-shi} for a more recent development along this line. Despite being straightforward but tedious, such expansions usually require calculations up to third or fourth order in order to achieve satisfactory accuracy. Finally, though not directly related to the current paper, Asian option pricing under stochastic volatility models is discussed in \cite{fouque} and \cite{takahashi}.

In the current paper, we address the problem of approximating the price of options on the discrete arithmetic average, and its continuous-time limit, with the underlying following a local volatility model. For the discretely monitored Asian option, we assume that the average is over a set of equally spaced discrete time samples before expiry. The application of the Brownian bridge representation for the transition density (see Theorem \ref{thm:BBrep-diffusion} below) obtained in \cite{hk-prob} between consecutive sampling time points leads to a ``path-integral" type expression for the Asian option price, see \eqref{eqn:Asian-call}.
A direct application of a Laplace asymptotic formula (in this case high-dimensional, see Lemma \ref{lma:laplace}) yields an approximation of the option price (see Theorem \ref{thm:approx-discrete-Asian}).
In the limit where the sampling time window approaches zero, the leading order term (in small time to expiry) can be expressed as a constrained variational problem of finding an optimal path, referred to as the {\it most-likely-path} (MLP), in the time-price space. 
An approximation of the continuously monitored Asian option price is obtained once the variational problem is solved, see Definition \ref{def:mlp} and Theorem \ref{thm:cts-Asian}.
The MLP approximation coincides with a rigorous derivation of the leading order based on a recent extension of the Freidlin-Wentzell theorem for a large class of models, see Theorem \ref{thm: ld}.
As for implied volatility, we opt to use the Bachelier model as benchmark rather than the Black-Scholes model because of the lack of analytical expression for Asian options in the Black-Scholes model. Such defined implied volatility in the European option case is sometimes referred to as the {\it implied normal volatility} in practice. By comparing corresponding expansions from the benchmark Bachelier model and from the local volatility model, we obtain, as the main result of the paper, the lowest order approximation of the implied normal volatility for Asian option in Theorem \ref{thm:imp-vol-approx}.

The paper is organized as follows. Section 2 lays out the model and provides derivations of the Laplace type approximation for discretely monitored Asian calls and the most-likely-path approximations for continuously monitored Asian call options. Section \ref{sec:LDP} is devoted to a rigorous derivation of the asymptotic behavior obtained in Section \ref{sec:Asian-option-asym} based on large deviation theory.
It was drawn to our attention that a rigorous proof of the most-likely-path approximation for the price of continuously monitored Asian option by the theory of large deviation has been done independently in \cite{pirjol-zhu}. Finally, we conclude by presenting numerical tests of the most-likely-path approximation.

Throughout the text, $(W_t,t\geq 0)$ denotes the standard Brownian motion defined on the filtered probability space $(\Omega,(\cF_t)_{t\geq 0},\P)$ satisfying the usual conditions.
Dot will always refer to the partial derivative with respect to the time variable and prime to the space variable $s$.

%
%

\section{Asian option pricing in local volatility models} \label{sec:Asian-option-asym}

In this section, we derive asymptotic expansions of the prices of both discretely and continuously monitored Asian calls in local volatility models.
First, the most-likely-path approximation in the discrete case is obtained in Theorem \ref{thm:approx-discrete-Asian} using a Brownian bridge representation for the transition density obtained in \cite{hk-prob} (see Theorem \ref{thm:BBrep-diffusion} below) and  a high-dimensional Laplace asymptotic formula (Lemma \ref{lma:laplace}). Second, an expression for the most-likely-path approximation in the continuous case is derived in Theorem \ref{thm:cts-Asian} by taking a formal limit of the approximation in the discrete case.
This provides a heuristic derivation for the leading order of the most-likely-path approximation for continuously monitored Asian calls.

We assume that the underlying asset $S$ follows the local volatility model
\begin{equation}
\label{eqn:local-vol}
  d S_t = S_t \sigl(S_t,t) dW_t = a(S_t,t) dW_t\qquad S_0=s_0 > 0\ .
\end{equation}
We will suppose that the diffusion function $a(s,t)$ is strictly positive (except possibly at $s=0$ where it could be $0$), and grows {\it at most linearly} in $s$: there exists $C>0$ such that
\begin{equation}
\label{eqn: bounded}
0\leq a(s,t)\leq C(1+|s|) \text{ for all $t\in [0,T]$ and for all $s\in \R$; }
\end{equation}
and is {\it locally Lipschitz}: for every $R>0$, there exists $C_R$ such that for all $t\in[0,T]$ and for every $x,y\in \R$ with $|s|,|s'|<R$
\begin{equation}
\label{eqn: lipschitz}
 |a(s,t)-a(s',t)|\leq C_R|s-s'|\ .
\end{equation}
These assumptions are sufficient to imply the existence and uniqueness of a strong solution to the SDE \eqref{eqn:local-vol}.
Such solutions will also satisfy a {\it large deviation principle} based on the work in \cite{chiarini-fischer} as explained in Section \ref{sec:LDP}.

Let $p(T,s_T|t,s_t)$, $t<T$, be the transition probability density from $(t,s_t)$ to $(T,s_T)$ for the local volatility model \eqref{eqn:local-vol}.
Consider the Lamperti transformation from $s$ to $x$
\begin{equation}
\label{eqn:lamperti}
x = \varphi(s,t) = \int_{s_0}^s \frac{d\xi}{a(\xi,t)} \quad \mbox{ for } s > 0 \ .
\end{equation}
Since $a$ is assumed locally Lipschitz and is strictly positive except at $s=0$, the transformation is well-defined for all $s > 0$, except possibly at $s = 0$.
The following representation for the transition density $p$ is derived in \cite{hk-prob}.
\begin{theorem} \label{thm:BBrep-diffusion}
Let $S=(S_t, t\geq 0)$ be the diffusion process given by \eqref{eqn:local-vol}.
Define the function $h$ by $h(x,t) = \varphi_t(s,t) - a_s(s,t)/2$, with $s = \varphi^{-1}(x,t)$, where $\varphi$ is the Lamperti transformation \eqref{eqn:lamperti}
and subindices refer to corresponding partial derivatives.
Then the transition density $p$ of $S$ from $(t,s_t)$ to $(T,s_T)$ has the representation:
\begin{equation} \label{eqn:BB-diffusion-NoIto}
  p(T,s_T|t,s_t) = \frac{g(T-t,\varphi(s_T,T)-\varphi(s_t,t))}{a(s_T,T)} \tE_{\varphi(s_t,t),\varphi(s_T,T)}\left[ e^{\int_t^T h(X_s,s)dX_s - \frac12\int_t^T h^2(X_s,s)ds} \right]
\end{equation}
where $g$ denotes the centered Gaussian density with variance $t$: $g(t,\xi) = \exp(-\xi^2/2t)/\sqrt{2\pi t}$ and
 $\tE_{x,y}[\cdot]$ is the expectation under the Brownian bridge measure from $x$ to $y$.

Equivalently, if $H$ is an antiderivative of $h$ with respect to $x$, namely, $\p_x H(x,t) = h(x,t)$, for all $x$ and $t$, then
\bea
 \label{eqn:BB-diffusion-Ito}
  p(T,s_T|t,s_t) &=& \frac{g(T-t,\varphi(s_T,T)-\varphi(s_t,t))}{a(s_T,T)} e^{H(\varphi(s_T,T),T) - H(\varphi(s_t,t),t)} \times \nonumber \\
  && \quad \tE_{\varphi(s_t,t),\varphi(s_T,T)}\left[ e^{- \frac12\int_t^T h^2(X_s,s) + h_x(X_s,s) + 2H_t(X_s,s) ds} \right] .
\eea
\end{theorem}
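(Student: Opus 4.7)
The natural strategy is a three-step reduction: Lamperti transform, Girsanov change of measure, then reparametrization back to the $s$-variable. I would set $X_t = \varphi(S_t,t)$ and apply It\^o's formula. Using $\varphi_s = 1/a$ and $\varphi_{ss} = -a_s/a^2$, the It\^o correction cancels against the quadratic variation term, leaving
\begin{equation*}
dX_t = \bigl[\varphi_t(S_t,t) - \tfrac12 a_s(S_t,t)\bigr] dt + dW_t = h(X_t,t)\,dt + dW_t.
\end{equation*}
Thus under $\P$ the transformed process $X$ is a unit-diffusivity It\^o process with drift $h$ evaluated along its own path, which is exactly the setting in which a Girsanov tilt converts it into a standard Brownian motion.

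Next, I would introduce the measure $\tP$ under which $X$ is a standard Brownian motion, by taking $d\tP/d\P\big|_{\cF_T} = \exp\bigl(-\int_0^T h(X_s,s)\,dX_s + \tfrac12 \int_0^T h^2(X_s,s)\,ds\bigr)$ and conditioning on $X_t = \varphi(s_t,t)$. Inverting the Radon--Nikodym derivative and reading off the joint density of $(X_t,X_T)$ under $\tP$ from the Brownian transition density $g$, one obtains
\begin{equation*}
p_X(T,x_T\mid t,x_t) = g(T-t, x_T-x_t)\,\tE_{x_t,x_T}\!\left[e^{\int_t^T h(X_s,s)\,dX_s - \frac12 \int_t^T h^2(X_s,s)\,ds}\right],
\end{equation*}
where the conditional expectation under $\tP$, given the endpoints, is the Brownian bridge expectation. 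The density of $S_T$ is then recovered from that of $X_T$ via the Jacobian $\varphi_s(s_T,T) = 1/a(s_T,T)$, producing \eqref{eqn:BB-diffusion-NoIto}.

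For the equivalent form \eqref{eqn:BB-diffusion-Ito}, I would apply It\^o's formula to $H(X_s,s)$ where $H_x = h$, obtaining
\begin{equation*}
\int_t^T h(X_s,s)\,dX_s = H(X_T,T)-H(X_t,t) - \int_t^T \bigl[H_t(X_s,s) + \tfrac12 h_x(X_s,s)\bigr] ds.
\end{equation*}
Because $X_t$ and $X_T$ are pinned under the bridge measure to $\varphi(s_t,t)$ and $\varphi(s_T,T)$, the boundary terms factor out of the expectation, and combining with $-\tfrac12 \int_t^T h^2\,ds$ yields the second formula.

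The one delicate point is verifying the Girsanov/Novikov hypothesis for $h$: the linear growth and local Lipschitz assumptions on $a$ do not directly transfer to $h$ because the Lamperti map can push $x\to\pm\infty$ as $s\to 0$ or $s\to\infty$, and $h$ involves the derivative $a_s$. I would handle this by a standard localization argument --- defining stopping times $\tau_n$ at which $X$ exits $[-n,n]$, establishing the representation on $\{T<\tau_n\}$ where the exponential martingale condition is trivially satisfied, and then letting $n\to\infty$ using the existence and uniqueness of the strong solution guaranteed by the hypotheses on $a$. This truncation step, rather than the algebraic manipulation, is where the main technical work lies.
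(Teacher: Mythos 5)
Your derivation is correct: the Lamperti transform yields $dX_t = h(X_t,t)\,dt + dW_t$, the Girsanov tilt and conditioning on endpoints produce the bridge representation, the Jacobian $1/a(s_T,T)$ converts back to the $s$-variable, and It\^o's formula applied to $H$ gives the second form; you also rightly flag the integrability of the exponential martingale as the real technical burden. The paper itself does not prove this theorem but quotes it from \cite{hk-prob}, where the derivation proceeds along essentially the same lines as yours, so there is no substantive divergence to report.
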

For notational simplicity, hereafter we shall denote the expectation term in \eqref{eqn:BB-diffusion-NoIto} as
\begin{equation}
\Psi(s_t,s_T) = \tE_{\varphi(s_t,t),\varphi(s_T,T)}\left[ e^{\int_t^T h(X_s,s)dX_s - \frac12\int_t^T h^2(X_s,s)ds} \right].
\end{equation}
With this notation, we also have for the term in \eqref{eqn:BB-diffusion-Ito}
\begin{equation}
\begin{aligned}
\Psi(s_t,s_T) &=e^{H(\varphi(s_T,T),T) - H(\varphi(s_t,t),t)} \times \\
&\hspace{1cm}\tE_{\varphi(s_t,t),\varphi(s_T,T)}\left[ e^{- \frac12\int_t^T h^2(X_s,s) + h_x(X_s,s) + 2H_t(X_s,s) ds} \right].
\end{aligned}
\end{equation}

\subsection{Small time asymptotic for discretely monitored Asian call}
Assume an (arithmetic) Asian call is sampled discretely at the time points $t_1 < t_2 < \cdots < t_n$ with $t_0 = 0$ and $t_n = T$. In other words, the payoff of such an Asian call is
\begin{equation}
  \left( \frac1{n}\sum_{i=1}^{n} S_{t_i} - K \right)^+,
\end{equation}
where the time interval between the sampling points is assumed equal, i.e., $t_{i} - t_{i-1} = \Delta t = T/n$ for $i=1,\cdots,n$.
By using the Brownian bridge representation \eqref{eqn:BB-diffusion-NoIto} or \eqref{eqn:BB-diffusion-Ito}, the joint density for $S_{t_1}, \cdots, S_{t_n}$ can be written as
\beaa
  && p(t_1, s_{t_1}|t_0, s_{t_0}) p(t_2, s_{t_2}|t_1, s_{t_1})  \cdots p(t_n , s_{t_n}|t_{n-1}, s_{t_{n-1}}) \\
  &=& \prod_{i=1}^n g(\Delta t,\varphi(s_{t_i},t_i)-\varphi(s_{t_{i-1}},t_{i-1})) \, \frac{\Psi(s_{t_i},s_{t_{i-1}})}{a(s_{t_i},t_i)}.
\eeaa
In this notation, the price $C=C(s_0,0;K,T)$ of a discretely monitored Asian call struck at $K$ can be written as
\begin{equation}
 \label{eqn:Asian-call}
\begin{aligned}
C &= \E\left[ \left( \frac1n\sum_{i=1}^n S_{t_i} - K \right)^+ \right] \\
&= \iint \left(\frac1n\sum_{i=1}^n s_{t_i} - K \right)^+ \prod_{i=1}^n p(t_i,s_{t_i}|t_{i-1},s_{t_{i-1}}) ds_{t_1}\cdots ds_{t_n} \\
&= \frac1{\left( 2\pi\Delta t\right)^{\frac n2}}\iint \left(\frac1n\sum_{i=1}^n s_{t_i} - K \right)^+ e^{-\frac{D(
\bs,\mathbf t)}{\Delta t}} W(\bs,\mathbf t) d\bs,
\end{aligned}
\end{equation}
where
\begin{equation}
\begin{aligned}
D(\bs,\mathbf t) &= \frac12\sum_{i=1}^n \left|\varphi(s_{t_i},t_i) - \varphi(s_{t_{i-1}},t_{i-1})\right|^2, \quad
  W(\bs,\mathbf t) = \prod_{i=1}^n \frac{\Psi(s_{t_{i-1}},s_{t_i})}{a(s_{t_i},t_i)},
\end{aligned}
\end{equation}
for $\mathbf t = (t_1,\cdots,t_n)$, $\bs = (s_{t_1}, \cdots, s_{t_n})$ and $d\bs = ds_{t_1}\cdots ds_{t_n}$.

Hence, the evaluation of the Asian call price $C$ becomes the computation of the multidimensional integral \eqref{eqn:Asian-call}.
The following Laplace asymptotic formula will be applied to estimate the $n$-dimensional integral in \eqref{eqn:Asian-call} for small $\Delta t$.

\begin{lemma}(Laplace asymptotic formula) \label{lma:laplace} \\
Let $R$ be a closed set in $\R^n$ with nonempty and smooth boundary $\p R$. Suppose $\theta$ is a continuous function in $R$ and attains its minimum uniquely at $x^* \in \p R$ and, given any $\epsilon > 0$, there exists $\delta > 0$ such that $\theta(x) \geq \theta(x^*) + \delta$ for all $x \in R \setminus B_\epsilon(x^*)$, where $B_\epsilon(x^*) = \{x:|x - x^*| < \epsilon \}$ is the open ball of radius $\epsilon$ centered at $x^*$. Assume that $f$ is integrable in $R$, i.e., $\int_R |f(x)|dx < \infty$ and that $f$ vanishes identically in $R^c$ and on the boundary $\p R$ but the inward normal directional derivative of $f$ at $x^*$ is nonzero.
Then we have the asymptotic expansion as $\tau \to 0^+$
\begin{eqnarray} 
&& \int_R e^{-\frac{\theta(x)}{\tau}} f(x) dx \label{lap-asymp} \\
&=& \frac{(2\pi)^{\frac{n-1}2} \tau^{\frac{n + 3}2} e^{-\frac{\theta(x^*)}{\tau}}}{\sqrt{\det \p_{\mathbf t}^2\theta(x^*)}|\nabla\theta(x^*)|}
\left[ \frac{\nabla f(x^*)\cdot\nabla\theta(x^*)}{|\nabla\theta(x^*)|^2} + \frac12 \tr\left\{\p_{\mathbf t}^2 f(x^*) \left[\p_{\mathbf t}^2\theta(x^*)\right]^{-1}\right\} + \O(\tau) \right], \nonumber
\end{eqnarray}
where $\p_{\mathbf t}^2 f(x^*)$ and $\p_{\mathbf t}^2 \theta(x^*)$ are the Hessian matrices of $f$ and $\theta$ respectively in the tangential direction to $R$ at $x^*$.
\end{lemma}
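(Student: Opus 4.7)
The plan is to apply the Laplace method adapted to a boundary minimum where the amplitude $f$ vanishes. First, the uniform separation hypothesis on $\theta$ together with the integrability of $f$ gives that the contribution from $R\setminus B_\epsilon(x^*)$ is bounded by $e^{-(\theta(x^*)+\delta)/\tau}\int_R|f|$, which is exponentially smaller than the target order $\tau^{(n+3)/2}e^{-\theta(x^*)/\tau}$; attention can therefore be restricted to a small neighborhood of $x^*$. Second, using the smoothness of $\p R$, I would introduce boundary-adapted local coordinates $(u,y)\in \R_+\times \R^{n-1}$ in which $\p R$ corresponds locally to $\{u=0\}$ and $u$ measures the inward normal distance. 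Because $x^*$ is a constrained minimum, Lagrange multipliers force $\nabla\theta(x^*)$ to be normal to $\p R$, so in these coordinates $\p_u\theta(x^*)=|\nabla\theta(x^*)|$ and $\nabla_y\theta(x^*)=0$; and since $f$ vanishes on $\p R$, one has the factorization $f(u,y)=u\,g(u,y)$ with $g(0,0)=\nabla f(x^*)\cdot\mathbf n$, where $\mathbf n$ is the inward unit normal.

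Taylor expanding at the origin of these coordinates gives
\[
\theta(u,y)=\theta(x^*)+|\nabla\theta(x^*)|\,u+\tfrac12\,y^T\p_{\mathbf t}^2\theta(x^*)\,y+O(u^2+u|y|+|y|^3),
\]
and $f(u,y)=u\,[\nabla f(x^*)\cdot\mathbf n]+O(u^2+u|y|)$. The natural rescaling $u=\tau\tilde u$, $y=\sqrt\tau\,\tilde y$ introduces a Jacobian $\tau^{(n+1)/2}$, turns the exponent into $|\nabla\theta(x^*)|\tilde u+\tfrac12\tilde y^T\p_{\mathbf t}^2\theta(x^*)\tilde y$ in the limit, and makes the amplitude equal to $\tau[\nabla f(x^*)\cdot\mathbf n]\tilde u+O(\tau^{3/2})$. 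The limit integral then factorizes into the one-dimensional integral $\int_0^\infty\tilde u\,e^{-|\nabla\theta(x^*)|\tilde u}d\tilde u=|\nabla\theta(x^*)|^{-2}$ and the Gaussian integral $\int_{\R^{n-1}}e^{-\tilde y^T\p_{\mathbf t}^2\theta(x^*)\tilde y/2}d\tilde y=(2\pi)^{(n-1)/2}/\sqrt{\det\p_{\mathbf t}^2\theta(x^*)}$. Assembling these factors and using $\nabla f(x^*)\cdot\mathbf n=\nabla f(x^*)\cdot\nabla\theta(x^*)/|\nabla\theta(x^*)|$ reproduces the leading term of \eqref{lap-asymp}.

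The subleading trace term and the $O(\tau)$ remainder would then be obtained by retaining the next-order Taylor contributions and computing moments of the form $\int y_iy_j\,e^{-y^TAy/2}dy=A^{-1}_{ij}\,(2\pi)^{(n-1)/2}/\sqrt{\det A}$. The main technical obstacle I anticipate is uniform control of the remainders in $\tau$: the polynomial growth of the Taylor remainders of $\theta$ and $f$ must be dominated by the Gaussian/exponential weights, which is handled by dominated convergence combined with standard tail estimates, but the odd-parity cross terms (in particular the $u|y|$ contribution in $\theta$ and the $uy$ contribution in $f$) need to be shown to integrate to strictly lower order via symmetry in $\tilde y$. I would also note that under the strict hypothesis $f|_{\p R}\equiv 0$ the tangential Hessian $\p_{\mathbf t}^2 f(x^*)$ itself vanishes, so the trace correction displayed in \eqref{lap-asymp} contributes trivially and the substantive corrections are absorbed in the $O(\tau)$.
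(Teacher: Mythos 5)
The paper does not actually prove this lemma --- it defers to Section 8.3 of Bleistein--Handelsman --- and your sketch is precisely the standard argument that reference carries out (exponential localization away from $x^*$, boundary-adapted coordinates with the anisotropic rescaling $u=\tau\tilde u$, $y=\sqrt{\tau}\,\tilde y$, a one-dimensional integral $\int_0^\infty \tilde u\,e^{-|\nabla\theta|\tilde u}\,d\tilde u$ in the normal direction times an $(n-1)$-dimensional Gaussian in the tangential directions, with the odd-in-$\tilde y$ half-order terms killed by symmetry), so it is correct and takes essentially the same route the paper points to. Your closing observation is also valid and worth recording: since $f\equiv 0$ on $\partial R$, the tangential Hessian $\partial_{\mathbf t}^2 f(x^*)$ vanishes, so the displayed trace term is identically zero under the lemma's own hypotheses and the substantive second-order correction is absorbed into the $\mathcal{O}(\tau)$.
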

The proof of the lemma is standard and straightforward. See for instance Section 8.3 in \cite{BH}. 

Now we apply the Laplace asymptotic formula \eqref{lap-asymp} to the multidimensional integral \eqref{eqn:Asian-call} by taking $\theta = D(\bs,\mt)$, $f = \left(\frac1n \sum_{i=1}^n s_{t_i} - K \right) W(\bs,\mt)$, and $R$ as the half space $R := \left\{\bs: 1/n\sum_{i=1}^n s_{t_i} \geq K \right\}$. The crucial step in applying the Laplace asymptotic formula \eqref{lap-asymp} is the determination of the minimum point of $D$ in the half space $R$, which boils down to solving the constrained optimization problem:
\bea
  && \min_{\bs} \frac12 \sum_{i=1}^n \left|\varphi(s_{t_i},t_i) - \varphi(s_{t_{i-1}},t_{i-1})\right|^2 \label{eqn:discrete-obj}
\eea
subject to
\begin{equation}
  \frac1n \sum_{i=1}^n s_{t_i} \geq K. \label{eqn:discrete-constr}
\end{equation}
\begin{remark}
{\rm We shall assume $s_0 < K$ in the following for if $s_0 \geq K$, the value of the constrained optimization problem \eqref{eqn:discrete-obj}:\eqref{eqn:discrete-constr} is 0 since one can simply take $s_{t_1} = s_{t_2} = \cdots = s_{t_n} = s_0$. Then $s_1 + \cdots + s_n = n s_0 \geq n K$ and $\varphi(s_{t_i},t_i) = \varphi(s_{t_0},t_i) = 0$ for all $1 \leq i \leq n$. Thus the objective function in \eqref{eqn:discrete-obj} attains its global minimum $0$.
 }
\end{remark}
\begin{remark}
{\rm We show in Section \ref{sec:appendix-convexity} Appendix I that, for $\Delta t$ small enough, the objective function in \eqref{eqn:discrete-obj} is in fact convex, which in turn implies that the minimizer, if there exists any, is unique since the constraint is a linear inequality.
 }
\end{remark}

We summarize the result for the price of a discretely monitored Asian call in Theorem \ref{thm:approx-discrete-Asian} whose proof in the time homogeneous case is simply a direct application of the Laplace asymptotic formula \eqref{lap-asymp} since the functions $D$ and $W$ are independent of $\mathbf{t}$. We remark that, modulo the exponential term, the result suggests the next order term is of order $3/2$ in $\Delta t$, which coincides with the order in the case for European options, see for example Theorem 2.3 in \cite{ghlow}. 
\begin{theorem}(Discrete monitored Asian option) \label{thm:approx-discrete-Asian} \\
The price $C=C(s_0, 0;K,T)$ of a discretely monitored Asian call struck at $K$ with $K > s_0$ and expiry time $T$ has the following asymptotic expansion as $T \to 0^+$, for fixed $n$,
\bea
  C &=& \E\left[ \left( \frac1n\sum_{i=1}^n S_{t_i} - K \right)^+ \right] \nonumber \\
  &=& \iint\limits_{\left\{\bs:\frac1n \sum_{i=1}^n s_{t_i} \geq K\right\}} \left( \frac1n \sum_{i=1}^n s_{t_i} - K \right) e^{-\frac{D(\bs,\mathbf{t})}{\Delta t}} W(\bs,\mathbf{t}) d\bs \nonumber \\
  &=& \frac{\Delta t^{\frac32}}{\sqrt{2\pi}} \frac{e^{-\frac{D(\bs^*,\mt)}{\Delta t}}}{|\nabla D(\bs^*,\mt)|} \times \left[\frac{\nabla W(\bs^*,\mt)\cdot\nabla D(\bs^*,\mt)}{\sqrt{\det \p_{\mathbf t}^2 D(\bs^*,\mt)}|\nabla D(\bs^*,\mt)|^2} \right. \nonumber \\
  && \qquad \left. + \frac12 \tr\left\{\p^2_{\mt}W(\bs^*,\mt) \left[\p_{\mt}^2 D(\bs^*,\mt)\right]^{-1} \right\} +  \O\left(\Delta t\right) \right], \label{eqn:approx-discrete-Asian}
\eea
where $\bs^*=(s_{t_1}^*,\cdots,s_{t_n}^*)$ is the minimizer of the minimization problem \eqref{eqn:discrete-obj} subject to the constraint $1/n \sum_{i=1}^n s_{t_i} \geq K$.
\end{theorem}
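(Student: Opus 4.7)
The plan is to apply the Laplace asymptotic formula of Lemma \ref{lma:laplace} directly to the $n$-dimensional integral \eqref{eqn:Asian-call} with small parameter $\tau = \Delta t$. Concretely I would take the phase to be $\theta(\bs) = D(\bs, \mt)$, the amplitude to be $f(\bs) = \left(\tfrac{1}{n}\sum_{i=1}^n s_{t_i} - K\right) W(\bs, \mt)$, and the domain to be the closed half-space $R = \{\bs \in \R^n : \tfrac{1}{n}\sum_i s_{t_i} \geq K\}$, whose boundary $\partial R$ is the hyperplane on which the payoff vanishes.

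The first key step is to verify the hypotheses of Lemma \ref{lma:laplace} with this data. Since $K > s_0$, the unconstrained minimum $\bs = (s_0,\ldots,s_0)$ of $D$, at which $D = 0$, is infeasible, so the constrained minimizer $\bs^*$ must lie on $\partial R$. Uniqueness of $\bs^*$ follows from the strict convexity of $D$ on $R$ for $\Delta t$ small enough, established in Appendix I (Section \ref{sec:appendix-convexity}); combined with continuity on compact sets and coercivity of $D$ as $|\bs|\to\infty$, this yields the required separation bound $\theta(\bs) \geq \theta(\bs^*) + \delta$ outside any ball $B_\epsilon(\bs^*)$. The amplitude $f$ vanishes on $\partial R$ since the payoff does, and its inward normal derivative at $\bs^*$ equals $W(\bs^*,\mt)\cdot \tfrac{1}{n}(1,\ldots,1)$, a nonzero vector since $W>0$ under the positivity of $a$ and the Gaussian weight.

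The second key step is to substitute into \eqref{lap-asymp} and combine with the prefactor $(2\pi\Delta t)^{-n/2}$ appearing in \eqref{eqn:Asian-call}. The constants and powers of $\tau$ reorganize as
\[
\frac{(2\pi)^{(n-1)/2}}{(2\pi)^{n/2}} \cdot \frac{\Delta t^{(n+3)/2}}{\Delta t^{n/2}} = \frac{\Delta t^{3/2}}{\sqrt{2\pi}},
\]
which gives the claimed leading factor. The geometric piece $e^{-D(\bs^*,\mt)/\Delta t}/|\nabla D(\bs^*,\mt)|$ and the bracket of order $1 + \O(\Delta t)$ in \eqref{eqn:approx-discrete-Asian} then follow from the bracket in \eqref{lap-asymp} after expressing $\nabla f$ and the tangential Hessian $\p_{\mt}^2 f$ at $\bs^*$ in terms of $\nabla W$ and $\p_{\mt}^2 W$ via the product rule $f = (\text{payoff})\cdot W$, using that the payoff and its pure second derivatives vanish at $\bs^*$. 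As the theorem remarks, in the time-homogeneous case this substitution is immediate since $D$ and $W$ do not depend on $\mt$; in the general case the $\mt$-dependence is purely parametric and does not affect the $\Delta t$-asymptotics.

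The principal obstacle is the convexity assertion deferred to Appendix I: without strict convexity of $D$ on $R$ for $\Delta t$ small enough, neither the uniqueness of $\bs^*$ nor the separation hypothesis of Lemma \ref{lma:laplace} can be claimed a priori, and the Laplace method would not yield a single-exponential leading contribution. A minor secondary obstacle is the bookkeeping of the tangential Hessian at $\bs^*$, which is routine precisely because the constraint is linear and the payoff vanishes at $\bs^*$, so the product-rule expansion of $\p_{\mt}^2 f$ collapses to the two terms stated in \eqref{eqn:approx-discrete-Asian}.
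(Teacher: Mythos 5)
Your proposal follows exactly the paper's route: the paper's entire proof is the single remark that the theorem is ``a direct application of the Laplace asymptotic formula \eqref{lap-asymp}'' with $\theta=D$, $f=(\text{payoff})\cdot W$ and $R$ the half-space, which is precisely your setup, and your verification of the hypotheses (the minimizer forced onto $\p R$ by $K>s_0$, uniqueness via the convexity of $D$ from Appendix I, vanishing of $f$ on $\p R$ with nonzero inward normal derivative $W(\bs^*,\mt)/\sqrt{n}$) together with the power count $(2\pi)^{(n-1)/2}\Delta t^{(n+3)/2}/(2\pi\Delta t)^{n/2}=\Delta t^{3/2}/\sqrt{2\pi}$ supplies detail the paper leaves implicit. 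The only step you assert rather than carry out is the final product-rule rewriting of $\nabla f(\bs^*)$ and $\p^2_{\mt} f(\bs^*)$ in terms of $\nabla W$ and $\p^2_{\mt}W$ to match the printed bracket, but the paper is no more explicit on that point than you are.
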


\subsection{Continuously monitored Asian call and the most-likely-path}
\label{subsection: continuous}
The approximate price of an Asian call obtained in Theorem \ref{thm:approx-discrete-Asian} is subject to solving a high-dimensional constrained optimization problem which is daunting in general. However, in the limit as $\Delta t$ approaches zero, the optimization problem converges to a variational problem to which, in certain cases such as Black-Scholes and CIR, the associated Euler-Lagrange equations have closed form solution.
The heuristic computation is given in this section. The rigorous derivation for the leading order using large deviation is done in Section  \ref{sec:LDP}.

Let $\{0=t_0 < t_1 < \cdots < t_n = T\}$ be a partition of the interval $[0,T]$ with $t_i - t_{i-1} = \Delta t := T/n$, for $i=1,\cdots,n$. Then, the price of a continuously monitored Asian call can be written as the limit of the prices of discretely monitored Asian calls as $n \to \infty$. Precisely,
\begin{equation}
\Eof{\left(\frac1T\int_0^T S_t dt - K \right)^+} = \lim_{n\to\infty} \Eof{\left(\frac1n\sum_{i=1}^n S_{t_i} - K\right)^+}
\end{equation}
by applying the Lebesgue dominated convergence theorem.
Hence, to the lowest order, it is natural to approximate the price of a continuously monitored Asian call by taking the limit of the approximate price of discretely monitored Asian call in \eqref{eqn:approx-discrete-Asian} as $\Delta t \to 0$. To be specific, rewrite the logarithm  of \eqref{eqn:approx-discrete-Asian} as
\begin{equation}
\begin{aligned}
 \log C 
& = -\frac{D(\bs^*,\mt)}{\Delta t} - \frac12\log(2\pi) + \frac32\log(\Delta t) - \log\left|\nabla D(\bs^*,\mt) \right| \\ 
& + \log \left[\frac{\nabla W(\bs^*,\mt)\cdot\nabla D(\bs^*,\mt)}{\sqrt{\det \p_{\mathbf t}^2 D(\bs^*,\mt)}|\nabla D(\bs^*,\mt)|^2} + \frac12 \tr\left\{\p^2_{\mt}W(\bs^*,\mt) \left[\p_{\mt}^2 D(\bs^*,\mt)\right]^{-1} \right\} + \O\left(\Delta t\right) \right],
\end{aligned}
\end{equation}
where we recall that $\bs^*$ is the minimizer of \eqref{eqn:discrete-obj}. Notice that the first term in the last expression is dominant as $\Delta t \to 0$.

To determine the limit as $\Delta t \to 0$ of the dominant term, we bring $\Delta t$ back to the objective function \eqref{eqn:discrete-obj} as
\begin{equation}
  \frac1{2\Delta t}\sum_{k=1}^n \left|\varphi(s_{t_i},t_i) - \varphi(s_{t_{i-1}},t_{i-1})\right|^2.
\end{equation}
Note that, since
\begin{equation}
\varphi(s_{t_i},t_i) - \varphi(s_{t_{i-1}},t_{i-1}) = \varphi_s(s_{t_{i-1}},t_{i-1}) \Delta s_{t_i} + \varphi_t(s_{t_{i-1}},t_{i-1}) \Delta t + o(\Delta s_{t_i},\Delta t),
\end{equation}
where $\Delta s_{t_i} = s_{t_i} - s_{t_{i-1}}$, we have
\begin{equation}
\begin{aligned}
 \lim_{\Delta t \to 0} \frac1{2\Delta t}\sum_{k=1}^n \left|\varphi(s_{t_i},t_i) - \varphi(s_{t_{i-1}},t_{i-1})\right|^2  
  &= \lim_{\Delta t \to 0} \frac1{2\Delta t}\sum_{k=1}^n \left| \varphi_s(s_{t_{i-1}},t_{i-1}) \Delta s_{t_i} \right|^2 + o\left((\Delta s_{t_i})^2, \Delta t \right) \\
  &= \lim_{\Delta t \to 0} \frac12\sum_{k=1}^n \left| \frac{\frac{\Delta s_i}{\Delta t}}{a(s_{i-1},t_{i-1})} \right|^2 \Delta t  \\
  &= \frac12 \int_0^T \left[ \frac{\dot s(t)}{a(s(t),t)} \right]^2 dt.
  \end{aligned}
\end{equation}
Therefore, in the limit as $\Delta t$ approaches zero, the optimization problem \eqref{eqn:discrete-obj} becomes the following variational problem\footnote{An equivalent formulation of the same problem in the Black-Scholes case was obtained in \cite{cibelli} (see (1.4)) for the analysis of fundamental solution.} on the space of paths $s: t\mapsto s(t)$
\begin{equation}
  \min_{s:\ t\mapsto s(t)} \frac12 \int_0^T \left[ \frac{\dot s(t)}{a(s(t),t)} \right]^2 dt  \label{eqn:var-obj}
\end{equation}
subject to
\begin{equation}
  \frac1T \int_0^T s(t) dt = K, \quad s(0) = s_0.  \label{eqn:var-cons}
\end{equation}
\begin{defn} $\mbox{ }$ \label{def:mlp} \\
{\rm The optimal path of the variational problem \eqref{eqn:var-obj}:\eqref{eqn:var-cons} is referred to as the {\it most-likely-path} (MLP) for the Asian call struck at $K$.
 }
\end{defn}
In view of the above, one expects that, at the heuristic level and for the leading order, the logarithm of the price $C$ of an out-of-the-money Asian call is approximately given by the solution to the constrained variational problem \eqref{eqn:var-obj}:\eqref{eqn:var-cons}, which is equivalent to determining the most-likely-path.
To determine the MLP, the Euler-Lagrange equation associated with the constrained variational problem \eqref{eqn:var-obj}:\eqref{eqn:var-cons} along with proper boundary conditions is derived as follows.
\begin{lemma}\label{lem:2.2}
The optimal path of the constrained variational problem (\ref{eqn:var-obj}):(\ref{eqn:var-cons}) satisfies the Euler-Lagrange equation
\bea
  && \frac{d}{dt}\left(\frac{\dot s}{a} \right) - \frac{a_t}{a^2} \dot s + \frac\lambda T a = 0  \label{ode}
\eea
with boundary conditions
\bea
  && s(0) = s_0, \quad \dot s(T) = 0, \label{eqn:mlp-bdry-cond}
\eea
where $\lambda$ is chosen such that $1/T\int_0^T s(t) dt = K$.
\end{lemma}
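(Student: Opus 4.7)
The plan is to treat this as a classical isoperimetric problem in the calculus of variations and handle the integral constraint via a scalar Lagrange multiplier. I introduce $\lambda\in\R$ and form the augmented functional
$$
J_\lambda[s] = \int_0^T \left[\frac12\frac{\dot s(t)^2}{a(s(t),t)^2} + \frac{\lambda}{T}\bigl(s(t)-K\bigr)\right] dt
$$
on the class of absolutely continuous paths with $s(0)=s_0$ and $s(T)$ free. A minimizer of the constrained problem is a stationary point of $J_\lambda$ for the multiplier $\lambda$ chosen so that the average constraint $\frac1T\int_0^T s(t)\,dt=K$ holds.

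Next, I compute the first variation. For a perturbation $\eta$ with $\eta(0)=0$ (but $\eta(T)$ free), setting $s_\varepsilon = s+\varepsilon\eta$ and differentiating under the integral at $\varepsilon=0$ gives
$$
\left.\frac{d}{d\varepsilon}\right|_{\varepsilon=0}\!J_\lambda[s_\varepsilon]
 = \int_0^T\!\left[\frac{\dot s}{a^{2}}\dot\eta - \frac{\dot s^{2} a_s}{a^{3}}\eta + \frac{\lambda}{T}\eta\right]\!dt.
$$
Integrating the first term by parts yields the boundary contribution $\left[\dot s/a^{2}\right]\eta\big|_0^T$ and an interior piece $-\eta\,\tfrac{d}{dt}(\dot s/a^{2})$. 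Since $\eta(0)=0$, only the endpoint $t=T$ survives in the boundary term, and arbitrariness of $\eta(T)$ forces the transversality condition $\dot s(T)/a(s(T),T)^{2}=0$, i.e.\ $\dot s(T)=0$. Applying the fundamental lemma of the calculus of variations to the interior part then gives the standard Euler-Lagrange equation
$$
\frac{d}{dt}\!\left(\frac{\dot s}{a^{2}}\right) + \frac{\dot s^{2} a_s}{a^{3}} - \frac{\lambda}{T}=0.
$$

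The remaining step is an algebraic rewriting to bring this into the form stated in the lemma. Expanding $\frac{d}{dt}(\dot s/a^{2}) = \ddot s/a^{2} - 2\dot s(a_s\dot s+a_t)/a^{3}$, multiplying the equation by $a$, and then recognizing that $\ddot s/a - \dot s^2 a_s/a^2 - \dot s a_t/a^{2} = \frac{d}{dt}(\dot s/a)$, collapses the result to
$$
\frac{d}{dt}\!\left(\frac{\dot s}{a}\right) - \frac{a_t}{a^{2}}\dot s - \frac{\lambda}{T}a = 0,
$$
which, after relabeling $\lambda\to -\lambda$ (a harmless sign convention, since $\lambda$ is ultimately fixed by the constraint), coincides with \eqref{ode}. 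Combined with the imposed $s(0)=s_0$ and the natural boundary condition $\dot s(T)=0$, this yields the two boundary conditions in \eqref{eqn:mlp-bdry-cond}.

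The derivation is essentially a standard exercise; the only non-routine point is the bookkeeping in the last algebraic manipulation, where the cubic-in-$a$ denominators and the $\dot s^2 a_s$ terms have to be tracked carefully in order to recognize the compact form on the left-hand side of \eqref{ode}. A minor separate issue is to justify that a minimizer of the constrained problem is indeed a critical point of $J_\lambda$ for some $\lambda$; this is guaranteed provided the constraint is non-degenerate (the linear functional $s\mapsto \frac1T\int_0^T s\,dt$ is obviously non-trivial), so the Lagrange multiplier rule applies in the usual way.
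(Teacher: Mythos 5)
Your proposal is correct and takes essentially the same route as the paper: an augmented (isoperimetric) functional with a scalar multiplier, a first variation against perturbations vanishing at $t=0$ but free at $t=T$, integration by parts yielding the natural condition $\dot s(T)=0$, and the fundamental lemma giving the Euler--Lagrange equation, which is then algebraically rearranged into the stated form. The only cosmetic difference is your sign convention for $\lambda$ (the paper writes the Lagrangian as the objective \emph{minus} $\lambda$ times the constraint, so no relabeling is needed there), which you correctly note is immaterial since $\lambda$ is fixed a posteriori by the constraint.
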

\begin{proof}
We first rewrite the constrained variational problem \eqref{eqn:var-obj}:\eqref{eqn:var-cons} in the Lagrangian form
\begin{equation}
L(s,\lambda) = \frac12 \int_0^T \left[ \frac{\dot s(t)}{a(s(t),t)}\right]^2 dt - \lambda \left(\frac1T\int_0^T s(t) dt - K \right),
\end{equation}
where $\lambda$ is the Lagrange multiplier.
Let $f:[0,T]\mapsto \R$ be a perturbation around the optimal path $s(t)$ with $f(0) = 0$. The first order criterion of optimality yields
\beaa
  0 &=& \left.\frac{d}{d\epsilon}\right|_{\epsilon=0} L(s + \epsilon f,\lambda) \\
  &=& \left.\frac{d}{d\epsilon}\right|_{\epsilon=0} \frac12 \int_0^T \left[ \frac{\dot s + \epsilon \dot f}{a(s+\epsilon f,t)} \right]^2 dt - \lambda \left(\frac1T\int_0^T \{s+\epsilon f\} dt - K \right) \\
  &=& \int_0^T \left[ \frac{\dot s}{a(s,t)} \right] \left[ \frac{\dot f}{a(s,t)} - \frac{a_s(s,t) \dot s f}{a^2(s,t)} \right] dt - \frac\lambda T \int_0^T  f dt \\
  &=& -\int_0^T \left\{\frac{a_s(s,t)}{a^3(s,t)} \dot s^2 + \frac\lambda T \right\}  f dt + \int_0^T \frac{\dot s \dot f}{a^2(s,t)} dt.
\eeaa
By applying integration by parts to the second integral and noting that $ f(0)=0$, the last equality becomes 
\begin{equation}
\begin{aligned}
0 &=
-\int_0^T \left\{\frac{a_s(s,t)}{a^3(s,t)} \dot s^2 + \frac\lambda T \right\}  f dt - \int_0^T \frac{d}{dt}\left[ \frac{\dot s}{a^2(s,t)} \right]  f dt + \frac{\dot s(T)}{a^2(s(T),T)} f(T)\ .
\end{aligned}
\end{equation}
Finally, since $ f$ is arbitrary and $a(s,t) > 0$, we obtain the Euler-Lagrange equation
\beaa
  \frac{a_s(s,t)}{a^3(s,t)} \dot s^2 + \frac\lambda T + \frac{d}{dt}\left[ \frac{\dot s}{a^2(s,t)} \right] = 0
\eeaa
which simplifies to
\beaa
\frac{d}{dt}\left(\frac{\dot s}{a}\right) - \frac{a_t}{a^2} \dot s + \frac\lambda T a = 0
\eeaa
with boundary conditions $s(0) = s_0,  \dot s(T) = 0$.
\end{proof}

We summarize the final result in the following theorem. A rigorous proof based on large deviation theory is postponed to Theorem \ref{thm: ld} in Section \ref{sec:LDP}.
\begin{theorem}(Log price of a continuously monitored Asian call) \label{thm:cts-Asian} \\
The price $C(S_t,t;K,T)$ at time $t$ of a continuously monitored out-of-the-money Asian call struck at $K> s_0$ and expiry time $T$, is approximately equal to
\begin{equation}
\log C(S_t,t;K,T) = -\frac12\int_t^T \left|\frac{\dot s(\tau)}{a(s(\tau),\tau)} \right|^2 d\tau + o(T -t)^{-1} \label{eqn:cts-log-Asian-call}
\end{equation}
where $s:t\mapsto s(t)$ is the solution to the constrained variational problem \eqref{eqn:var-obj}:\eqref{eqn:var-cons}.
\end{theorem}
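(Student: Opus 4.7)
The plan is to derive the claim heuristically as the $\Delta t\to 0$ limit of Theorem \ref{thm:approx-discrete-Asian}, identifying the leading-order functional on paths; the sharper small-$(T-t)$ asymptotic with error $o((T-t)^{-1})$ is then rigorously justified in Section \ref{sec:LDP} via a Freidlin--Wentzell-type large deviation principle. Throughout I take $t=0$, the general case following by translation invariance in the time variable.

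First, I would take logarithms in \eqref{eqn:approx-discrete-Asian} to obtain
\[
\log C = -\frac{D(\bs^*,\mt)}{\Delta t} + \tfrac{3}{2}\log\Delta t + R(\bs^*,\mt),
\]
with the remainder $R$ collecting $\log|\nabla D|$, $\log\sqrt{\det \p_{\mt}^2 D}$, and the bracketed Laplace factor. Provided $\bs^*$ remains in a compact region and the tangential Hessian is non-degenerate, $R$ stays bounded, so the only diverging piece as $\Delta t\to 0$ is $-D(\bs^*,\mt)/\Delta t$.

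Second, for any smooth candidate path $s(\cdot)$ I would pass to the limit in the rescaled discrete action. Using the Lamperti expansion
\[
\varphi(s_{t_i},t_i)-\varphi(s_{t_{i-1}},t_{i-1}) = \frac{\Delta s_{t_i}}{a(s_{t_{i-1}},t_{i-1})} + \varphi_t(s_{t_{i-1}},t_{i-1})\,\Delta t + o(\Delta t),
\]
the $\varphi_t\Delta t$ contribution and the cross terms, once squared and summed, are $O(\Delta t)$ after division by $\Delta t$, while Riemann summation of the leading piece yields
\[
\frac{D(\bs,\mt)}{\Delta t} \;\longrightarrow\; \frac12\int_0^T\left|\frac{\dot s(\tau)}{a(s(\tau),\tau)}\right|^2 d\tau.
\]
Simultaneously, the discrete constraint $\frac{1}{n}\sum s_{t_i}\ge K$ converges to $\frac{1}{T}\int_0^T s(\tau)\,d\tau\ge K$, and existence and Euler--Lagrange characterization of the minimizer $s^*(\cdot)$ of \eqref{eqn:var-obj}--\eqref{eqn:var-cons} are supplied by Lemma \ref{lem:2.2}.

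The main obstacle is interchanging the limit with the minimization: one must show that the piecewise-linear interpolants of the discrete optima $\bs^*$ form a precompact family in a suitable path space and that their limit points minimize \eqref{eqn:var-obj} under \eqref{eqn:var-cons}. This is essentially a $\Gamma$-convergence statement, further complicated by the constraint being active at the minimum and by the need for uniform control along the Lamperti map. At the heuristic level one simply substitutes the formal limit for $D(\bs^*,\mt)/\Delta t$ to arrive at \eqref{eqn:cts-log-Asian-call}. The clean way to sidestep this obstacle, adopted in Section \ref{sec:LDP}, is to bypass the discretization altogether: apply the large deviation principle for the solution of \eqref{eqn:local-vol} under the growth and regularity assumptions \eqref{eqn: bounded}--\eqref{eqn: lipschitz} (following \cite{chiarini-fischer}) and contract by the continuous functional $s\mapsto\frac{1}{T}\int_0^T s(\tau)d\tau$, whose rate function is exactly the action in \eqref{eqn:var-obj}; this simultaneously delivers the $o((T-t)^{-1})$ error.
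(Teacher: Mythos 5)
Your proposal follows essentially the same route as the paper: the heuristic passage from the discrete Laplace expansion \eqref{eqn:approx-discrete-Asian} to the continuous action via the Lamperti expansion and Riemann summation, with the rigorous justification of the leading order deferred to the large deviation argument of Section \ref{sec:LDP} (Theorem \ref{thm: ld}, via \cite{chiarini-fischer} and the contraction principle applied to $f\mapsto\frac{1}{T}\int_0^T f$). Your explicit flagging of the limit--minimization interchange as the gap in the discrete-to-continuous passage is consistent with the paper's own characterization of that derivation as heuristic, so there is nothing further to add.
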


We conclude the section by deriving closed form expressions for the most-likely-path in the Bachelier, the Black-Scholes, and the Cox-Ingersoll-Ross (CIR) models by solving their corresponding boundary value problems \eqref{ode}:\eqref{eqn:mlp-bdry-cond}.
However, for more general cases we will have to resort to an iteration scheme for numerical computations of the most-likely-path. See Section \ref{sec:numerics} for a numerical scheme.
\begin{example} (Bachelier model) \\
{\rm In the Bachelier model, $a(s,t) = \sigma$, a constant. The Euler-Lagrange equation \eqref{ode} reduces to
\begin{equation}
\frac{d}{dt}\left(\frac{\dot s}{\sigma}\right) + \frac\lambda T \sigma = 0
\end{equation}
whose general solution is $s(t) = -\lambda\sigma^2 t^2/(2T) + c_1 t + c_0$. From the boundary conditions \eqref{eqn:mlp-bdry-cond} together with $1/T\int_0^T s(t) dt = K$, we conclude that $\lambda = 3(K - s_0)/(\sigma^2 T^2)$, $c_1 = 3(K - s_0)/T$, and $c_0 = s_0$. Therefore, the most likely path for Asian call in the Bachelier model is a downward parabola in the $t$-$s$ plane given by
\begin{equation}
\label{eqn: MLP bachelier}
  s(t) = -\frac{3(K - s_0)}2 \left( \frac tT \right)^2 + 3(K - s_0) \frac tT + s_0.
\end{equation}
We remark that the most-likely-path in this case does not depend on $\sigma$.
Also, \eqref{eqn:cts-log-Asian-call} for the Bachelier model becomes
\begin{equation}
  -\frac12\int_0^T \left[ \frac{\dot s(t)}{a(s(t),t)} \right]^2 dt = -\frac{3(K - s_0)^2}{2\sigma^2 T}.
\end{equation}
Thus it captures the exponential decay of an out-of-the-money Asian call in the Bachelier model, see \eqref{eqn:small-time-Bachelier} below.
 }
\end{example}

\begin{example} (Black-Scholes model)\label{ex:bs} \\
{\rm In the Black-Scholes model, $a(s,t) = \sigma s$ with $\sigma$ being a constant. The Euler-Lagrange equation \eqref{ode} reads
\begin{equation}
  -\frac{\ddot s}{s^2} + \frac{\dot s^2}{s^3} = \lambda \sigma^2
\end{equation}
to which the general solutions are
\begin{equation}
  s(t) = \frac{c_1^2}{2\lambda \sigma^2}\left[ 1 - \tanh^2\left( \frac{c_1}2 [c_2 - t] \right) \right]
\end{equation}
and 
\begin{equation}
  s(t) = -\frac{c_1^2}{2\lambda \sigma^2}\left[ 1 + \tan^2\left( \frac{c_1}2 [c_2 - t] \right) \right],
\end{equation}
where $c_1$ and $c_2$ are (to be determined) constants.
The condition $\dot s(T) = 0$ implies that $c_2 = T$ in either case and the conditions $s(0)=s_0$, $\int_0^T s(t)dt = TK$ imply respectively that
\bea
  && \frac{c_1^2}{\lambda \sigma^2(1 + \cosh(c_1 T))} = s_0,  \quad
  c_1 \tanh\left( \frac{c_1 T}2 \right) = \lambda \sigma^2 K T \label{eqn:BS-cond-1}
\eea
and 
\bea
  && \frac{-c_1^2}{\lambda \sigma^2(1 + \cos(c_1 T))} = s_0,  \quad
  - c_1 \tan\left( \frac{c_1 T}2 \right) = \lambda \sigma^2 K T. \label{eqn:BS-cond-2}
\eea
By dividing the two equations in \eqref{eqn:BS-cond-1} and rearranging terms, $c_1$ is given by the solution to the equation
\begin{equation}
  \frac{c_1 T}{\sinh(c_1 T)} = \frac{s_0}K \quad \mbox{ if } s_0 < K 
\end{equation}
or $c_1 = f^{-1}\left( s_0/K \right)/T$, where $f(x) = x/\sinh x$. On the other hand, from \eqref{eqn:BS-cond-2} we obtain that 
\begin{equation}
  \frac{\sin(c_1 T)}{c_1 T} = \frac K{s_0} \quad \mbox{ if } K > s_0
\end{equation}
or $c_1 = g^{-1}\left( K/s_0 \right)/T$, where $g(x) = \sin x/x$ for $x \in [0,\pi]$.
 
Hence, the most-likely-path $s(t)$ for $0 \leq t \leq T$ in the Black-Scholes model after simplification becomes
\begin{equation}
\label{eqn: MLP BS}
s(t) = \frac{s_0 \cosh^2\left(\frac{c_1 T}2\right)}{\cosh^2\left(\frac{c_1 (T-t)}2\right)},
\end{equation}
where $c_1 = f^{-1}\left( s_0/K \right)/T$ for $s_0 < K$; whereas for $s_0 > K$
\begin{equation}
s(t) = \frac{s_0 \cos^2\left(\frac{c_1 T}2\right)}{\cos^2\left(\frac{c_1 (T-t)}2\right)}
\end{equation}
and $c_1 = g^{-1}\left( K/s_0 \right)/T$. Moreover, for $s_0 < K$,
\begin{equation}
  \frac12 \int_0^T \left| \frac{\dot s(t)}{\sigma s(t)} \right|^2 dt = \frac{f^{-1}\left( \frac{s_0}K \right)}{\sigma^2 T}\left[ \frac12 f^{-1}\left( \frac{s_0}K \right) - \tanh\left( \frac12 f^{-1}\left( \frac{s_0}K \right) \right) \right],
\end{equation}
and for $s_0 > K$
\begin{equation}
  \frac12 \int_0^T \left| \frac{\dot s(t)}{\sigma s(t)} \right|^2 dt = \frac{g^{-1}\left( \frac K{s_0} \right)}{\sigma^2 T}\left[ \tan\left( \frac12 g^{-1}\left( \frac K{s_0} \right) \right) - \frac12 g^{-1}\left( \frac K{s_0} \right) \right].
\end{equation}
 }
\end{example}
\begin{example} (Cox-Ingersoll-Ross model)\label{ex:cir} \\
{\rm For the CIR model, $a(s,t) = \sigma \sqrt{s}$, where $\sigma$ is a constant. The Euler-Lagrange equation \eqref{ode} becomes
\begin{equation}
  \frac{d}{dt}\left(\frac{\dot s(t)}{\sqrt{s(t)}} \right)+2 \, \phi \, \sqrt{s(t)}
= 2\, \frac{d^2}{dt^2}\left({\sqrt{s(t)}} \right) + 2 \,\phi\, \sqrt{s(t)} = 0
\end{equation}
for some constant $\phi$ with general solution given by
\begin{equation}
\label{eqn: MLP CIR}
  \sqrt{s(t) }= c_2 \,\cos\left(\sqrt\phi\,(c_1-t)\right)
\end{equation} 
if $\phi > 0$ and 
\begin{equation}
\label{eqn: MLP CIR-2}
  \sqrt{s(t) }= c_2 \,\cosh\left(\sqrt{-\phi}\,(c_1-t)\right)
\end{equation}
if $\phi < 0$.
The boundary conditions \eqref{eqn:mlp-bdry-cond} imply respectively that
\beaa
  &&  c_1= T
  \quad \mbox{ and } \quad c_2= \frac{\sqrt{s_0}}{\cos\left(\sqrt \phi\,T\right)} \quad \mbox{ or } \quad
  \frac{\sqrt{s_0}}{\cosh\left(\sqrt{-\phi}\,T\right)}.
\eeaa
Thus
\begin{equation}
s(t) = s_0 \,\left\{\frac{\cos\left(\sqrt\phi\,(T-t)\right)}{ \cos\left(\sqrt\phi\,T\right)  }\right\}^2 \quad \mbox{ or } \quad 
s(t) = s_0 \,\left\{\frac{\cosh\left(\sqrt{-\phi}\,(T-t)\right)}{ \cosh\left(\sqrt{-\phi}\,T\right)  }\right\}^2.  
\end{equation}
The parameter $\phi$ is determined by the solution to the equation
\begin{equation}
 \frac1T\, \int_0^T \,s(t)\, dt = K = \frac{s_0}{2\, \sqrt\phi\,T}
 \left[\tan\left(\sqrt\phi\,T \right) + \sqrt\phi\,T  \sec^2\left( \sqrt\phi\,T \right)\right]
\end{equation}
if $s_0 < K$ and by 
\begin{equation}
 \frac K{s_0} = \frac1{2\, \sqrt{-\phi}\,T}
 \left[\tanh\left(\sqrt{-\phi}\,T \right) + \sqrt{-\phi}\,T  \sech^2\left( \sqrt{-\phi}\,T \right)\right]
\end{equation}
if $s_0 > K$.
Finally, we have, subject to the determination of $\phi$, that 
\begin{equation}
  \int_0^T \left|\frac{\dot s(t)}{\sigma \sqrt{s(t)}}\right|^2 dt = \frac{s_0\, \sqrt\phi}{\sigma^2} \, \left[2\,\sqrt\phi\,T -\sin \left(2 \,\sqrt\phi\,T \right) \right] \,\sec^2\left(\sqrt\phi\,T\right)
  \end{equation}
for $s_0 < K$ and 
\begin{equation}
  \int_0^T \left|\frac{\dot s(t)}{\sigma \sqrt{s(t)}}\right|^2 dt = \frac{s_0\, \sqrt{-\phi}}{\sigma^2} \, \left[ -2\,\sqrt{-\phi} \,T  + \sinh\left(2 \,\sqrt{-\phi} \,T \right) \right] \,\sech^2\left( \sqrt{-\phi} \, T \right)
 \end{equation}
if $s_0 > K$.  
 }
\end{example}

\subsection{Implied normal volatility for Asian options}
As far as implied volatility is concerned, we opt to use the Bachelier model as benchmark rather than the conventional Black-Scholes model because of the lack of analytical expression for Asian options in the Black-Scholes model. Such defined implied volatility in the European counterpart is sometimes used and referred to as the {\it implied normal volatility} in practice. For European calls, this approximation is good whenever $\sigma\sqrt{T}$ is small, at least for at-the-money options, see for example \cite{ST07}. We expect the same should hold for Asian options, though the lack of analyticity makes it hard to check. In particular, note that the approximation might be problematic for small strike prices as the ratio $S/K$ will be large.

Recall that, by straightforward calculations, under the Bachelier model
\begin{equation}
dS_t = \sigma_b dW_t,
\end{equation}
where $\sigma_b$ is a constant, the price of a continuously monitored Asian call option struck at $K$ with expiry $T$ has the closed form expression
\bea
C_b(K,T,\sigma_b) = \frac{\sigma_b \sqrt T}{\sqrt{6\pi}}e^{-\frac{3(s_0 - K)^2}{2\sigma_b^2 T}} + (s_0 - K) N\left( \frac{\sqrt3(s_0 - K)}{\sigma_b\sqrt T} \right)
\label{eq:BachelierAsianCall}
\eea
since the average price $1/T \int_0^T S_tdt$ is normally distributed with mean $s_0$ and variance $\sigma_b^2 T/3$. In \eqref{eq:BachelierAsianCall}, $N(\cdot)$ denotes the cumulative distribution function of standard normal distribution. On the other hand, for a discretely monitored Asian call struck at $K$ with expiry $T$ in the Bachelier model, since $S_{t_1} + \cdots + S_{t_n}$ is normally distributed with
\beaa
  && \E[S_{t_1} + \cdots + S_{t_n}] = n s_0,  \\
  && \var[S_{t_1} + \cdots + S_{t_n}] = \sigma_b^2 T \frac{(n+1)(2n+1)}6,
\eeaa
its price $C_b^d$ is given by
\bea
C_b^d(K,T,\sigma_b) = \frac{\sigma_b \sqrt T}{\sqrt{2 A_n \pi}}e^{-\frac{A_n(s_0 - K)^2}{2\sigma_b^2 T}} + (s_0 - K) N\left( \frac{\sqrt{A_n}(s_0 - K)}{\sigma_b\sqrt T} \right),
\label{eq:BachelierAsianCall-d}
\eea
where $A_n := 6n^2/(n+1)(2n+1)$. Apparently, $A_n \to 3$ as $n\to\infty$.

Regarding the small time expansion, we remark that by using the asymptotic expansion $N(x) = N'(x)[-1/x + 1/x^3 + \mathcal O(1/x^5)]$ as $x \to -\infty$, we have
\bea
C_b(s_t,K) &=& \frac{\sigma_b \sqrt{T-t}}{\sqrt{6\pi}}e^{-\frac{3(s_t - K)^2}{2\sigma_b^2(T-t)}} + (s_t - K) N\left( \frac{\sqrt3(s_t - K)}{\sigma_b\sqrt{T-t}} \right) \nonumber \\
&=& e^{-\frac{3(s_t - K)^2}{2\sigma_b^2(T-t)}} \left[\frac{\sigma_b^3 (T-t)^{\frac32}}{3\sqrt{6\pi}(K - s_0)^2} + \mathcal O(T-t)^{\frac52}\right] \label{eqn:small-time-Bachelier}
\eea
as $t \to T^-$. Similarly, in the discrete case as $t \to T$,
\bea
C_b^d(s_t,K) &=& e^{-\frac{A_n(s_t - K)^2}{2\sigma_b^2(T-t)}} \left[\frac{\sigma_b^3 (T-t)^{\frac32}}{A_n\sqrt{2 A_n \pi}(K - s_0)^2} + \mathcal O(T-t)^{\frac52}\right]. \label{eqn:small-time-Bachelier-d}
\eea

We remark that once the small time asymptotic for the price of an out-of-the-money call is established on the model side, it is a common practice to derive the small time asymptotic of implied volatility thereby. 
To that end, the following expansion for the implied normal volatility in terms of the call price given in \cite{grunspan} will be helpful.    
\begin{proposition}
For a fixed strike out-of-the-money call with time to expiry $T$, let $C = C(T)$ be the price of the European call regarded as a function of $T$. Then as time to expiry $T \to 0$, the implied normal volatility $\sigma_N$ has the asymptotic
\bea 
&& \sigma_N^2 T = \frac{(s_0 - K)^2}{2(\log s_0 - \log C(T))} + o(\log C(T)). \label{eqn:normal-vol-exp} 
\eea
as $T \to 0$.
\end{proposition}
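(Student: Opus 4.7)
The plan is to invert the explicit Bachelier call-price formula
\begin{equation*}
C_b(K,T,\sigma_N) = \sigma_N\sqrt{T}\,N'(d) + (s_0 - K)\,N(d), \qquad d = \frac{s_0 - K}{\sigma_N\sqrt{T}},
\end{equation*}
for the implied normal volatility $\sigma_N$, using the tail asymptotic of the standard Gaussian CDF already cited in the paper just before the proposition. Since the call is fixed-strike out-of-the-money ($K > s_0$), one has $d < 0$; the first step is a preliminary bound forcing $d \to -\infty$ as $T \to 0$. This follows from $C(T) \to 0$ (automatic for OTM options by continuity of the payoff) combined with the monotonicity of $C_b$ in $\sigma_N$, which yields $\sigma_N\sqrt{T} \to 0$ and hence $d \to -\infty$.

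Next, I would substitute the Mills-ratio expansion $N(d) = N'(d)\bigl(|d|^{-1} - |d|^{-3} + O(|d|^{-5})\bigr)$ into $C_b$. The $\sigma_N\sqrt{T}\,N'(d)$ contribution cancels exactly with the leading $-(K-s_0)|d|^{-1}N'(d) = -\sigma_N\sqrt{T}\,N'(d)$ term from $(s_0-K)N(d)$, leaving
\begin{equation*}
C_b = \frac{\sigma_N^3 T^{3/2}}{(K-s_0)^2}\,N'(d)\bigl(1 + O(\sigma_N^2 T)\bigr).
\end{equation*}
Taking logarithms and using $\log N'(d) = -(s_0-K)^2/(2\sigma_N^2 T) - \tfrac12\log(2\pi)$ then gives
\begin{equation*}
-\log C(T) = \frac{(s_0-K)^2}{2\sigma_N^2 T} - \tfrac{3}{2}\log(\sigma_N^2 T) + 2\log(K-s_0) + \tfrac{1}{2}\log(2\pi) + O(1).
\end{equation*}

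Finally, I would solve for $\sigma_N^2 T$ algebraically. The leading term $(s_0-K)^2/(2\sigma_N^2 T) \to +\infty$ dominates the $O(\log(\sigma_N^2 T))$ corrections, giving $\sigma_N^2 T \sim (s_0-K)^2/(-2\log C(T))$ at leading order. To reach the stated form I would rewrite $-\log C(T) = (\log s_0 - \log C(T)) - \log s_0$, absorb the constant $\log s_0$ together with the sublogarithmic remainders into the error, and collect them into the advertised $o(\log C(T))$ term.

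The main obstacle is making the final rearrangement rigorous rather than heuristic: since $\sigma_N$ appears implicitly on both sides of the logarithmic identity, one must first establish crude two-sided bounds $\sigma_N^2 T \asymp 1/|\log C(T)|$ and then bootstrap back into the equation to control the $\log(\sigma_N^2 T)$ term. Once these bounds are in hand, the refined asymptotic follows by elementary algebra, and this is precisely the route taken in \cite{grunspan} that I would cite for the detailed error bookkeeping.
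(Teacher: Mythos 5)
Your proposal is correct, and it is worth noting that the paper itself does not prove this proposition: it is imported verbatim from \cite{grunspan}, so there is no internal proof to compare against. Your route --- inverting the Bachelier formula directly via the Gaussian tail expansion --- is the natural self-contained derivation, and each step checks out. The cancellation you describe is exactly right: with $d=(s_0-K)/(\sigma_N\sqrt{T})\to-\infty$ and $N(d)=N'(d)\bigl(|d|^{-1}-|d|^{-3}+\mathcal O(|d|^{-5})\bigr)$, the $|d|^{-1}$ term produces $-\sigma_N\sqrt{T}\,N'(d)$, which kills the vega-like term and leaves $C_b=\sigma_N^3T^{3/2}N'(d)(K-s_0)^{-2}\bigl(1+\mathcal O(\sigma_N^2T)\bigr)$; this is precisely the (non-Asian analogue of the) expansion \eqref{eqn:small-time-Bachelier} that the paper records for the Bachelier Asian price, so your intermediate formula is consistent with the paper's own computations. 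Your preliminary step ($C(T)\to 0$ plus strict monotonicity of $C_b$ in $\sigma_N\sqrt{T}$ forces $\sigma_N\sqrt{T}\to 0$, hence $d\to-\infty$) is the right way to justify using the tail expansion, and the bootstrap you flag --- first $\sigma_N^2T\asymp 1/|\log C(T)|$, then reinsertion to control the $\tfrac32\log(\sigma_N^2T)$ correction --- is standard and closes the argument. One remark: the error term $o(\log C(T))$ in the statement is extremely generous (it allows errors diverging slower than $|\log C(T)|$, whereas the true error is $\mathcal O\bigl(\log|\log C(T)|/(\log C(T))^2\bigr)=o(1)$), so your argument over-delivers on what is claimed; relatedly, replacing $-\log C(T)$ by $\log s_0-\log C(T)$ only perturbs the leading term by $\mathcal O\bigl(1/(\log C(T))^2\bigr)$, which is absorbed harmlessly. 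No gaps.
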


\subsubsection{\textbf{Implied normal volatility for discretely monitored Asian option}}
The implied normal volatility for a discretely monitored Asian call is defined by solving the following equation for $\sigma_b$
\begin{equation}
C^d(s_t,K,T) = \frac{\sigma_b \sqrt T}{\sqrt{2 A_n \pi}}e^{-\frac{A_n(s_0 - K)^2}{2\sigma_b^2 T}} + (s_0 - K) N\left( \frac{\sqrt{A_n}(s_0 - K)}{\sigma_b\sqrt T} \right), \label{eqn:imp-vol-discrete}
\end{equation}
where $A_n := 6n^2/(n+1)(2n+1)$ and $C^d$ is the price of a discretely monitored Asian call obtained by the model or from the market. Obviously, among other parameters, such defined $\sigma_b$ depends on $K$ and $T$.

To derive an asymptotic expansion for the implied normal volatility $\sigma_b$ defined in \eqref{eqn:imp-vol-discrete} in small time, the idea, as in \cite{ghlow}, is to compare the corresponding terms in the expansions on both side of \eqref{eqn:imp-vol-discrete}. The lowest order term is thus obtained by matching the exponential terms on both side of \eqref{eqn:imp-vol-discrete}. Precisely, recall the small time asymptotic of the price of a discrete Asian call from Theorem \ref{thm:approx-discrete-Asian}
\begin{equation}
\begin{aligned}
 \E\left[ \left( \frac1n\sum_{i=1}^n S_{t_i} - K \right)^+ \right] 
&= \frac{\Delta t^{\frac32}}{\sqrt{2\pi}} \frac{e^{-\frac{D(\bs^*,\mt)}{\Delta t}}}{|\nabla D(\bs^*,\mt)|} \times \left[\frac{\nabla W(\bs^*,\mt)\cdot\nabla D(\bs^*,\mt)}{\sqrt{\det \p_{\mathbf t}^2 D(\bs^*,\mt)}|\nabla D(\bs^*,\mt)|^2} \right. \nonumber \\
& \qquad \left. + \frac12 \tr\left\{\p^2_{\mt}W(\bs^*,\mt) \left[\p_{\mt}^2 D(\bs^*,\mt)\right]^{-1} \right\} +  \O\left(\Delta t\right) \right],
\end{aligned}
\end{equation}
where $\Delta t = T/n$. By matching the exponential terms in the above expression and in \eqref{eqn:small-time-Bachelier-d}, we have
\begin{equation}
\begin{aligned}
& e^{-\frac{A_n(s_t - K)^2}{2\sigma_b^2 T}} = e^{-\frac{D(\bs^*,\mathbf{t})}{\Delta t}} \\
&\Longrightarrow \frac{A_n(s_t - K)^2}{2\sigma_b^2 T} = \frac{D(\bs^*,\mathbf{t})}{\Delta t} = \frac{n D(\bs^*,\mathbf{t})}T \\
&\Longrightarrow \frac1{\sigma_b^2} = \frac{2 n D(\bs^*,\mathbf{t})}{A_n(s_t - K)^2} = \frac n{A_n(s_t - K)^2}\sum_{k=1}^n \left|\varphi(s^*_{t_i},t_i) - \varphi(s^*_{t_{i-1}},t_{i-1})\right|^2.
\end{aligned}
\end{equation}
Hence, the lowest (zeroth) order approximation of the implied normal volatility is given by
\begin{equation}
\sigma_b = \sqrt{A_n} |s_t - K| \left(n \sum_{k=1}^n \left|\varphi(s^*_{t_i},t_i) - \varphi(s^*_{t_{i-1}},t_{i-1})\right|^2 \right)^{-\frac12} + o(1),
\end{equation}
where recall that $(s_{t_1}^*,\cdots,s_{t_n}^*)$ is the solution to the $n$-dimensional constrained optimization problem \eqref{eqn:discrete-obj}:\eqref{eqn:discrete-constr}. We summarize the result in the following theorem.
\begin{theorem}(Implied normal volatility asymptotic for discrete Asian call) \label{thm:imp-vol-approx-discrete} \\
For a discretely monitored out-of-the-money Asian call struck at $K$, i.e., $s_0 < K$, in which the underlying is driven by the local volatility model \eqref{eqn:local-vol}, the implied normal volatility $\sigma_b^d$ defined as in \eqref{eqn:imp-vol-discrete} has the asymptotic expansion as $T \to 0^+$
\begin{equation}
\sigma_b^d(K,T) = \sigma^d_{b,0}(K) + \mathcal O(T),
\end{equation}
where
\begin{equation}
\sigma^d_{b,0}(K) = \sqrt{A_n} |s_t - K| \left(n \sum_{k=1}^n \left|\varphi(s^*_{t_i},t_i) - \varphi(s^*_{t_{i-1}},t_{i-1})\right|^2 \right)^{-\frac12},
\end{equation}
with $A_n = 6n^2/(n+1)(2n+1)$ and $(s_{t_1}^*,\cdots,s_{t_n}^*)$ the solution to the constrained optimization problem \eqref{eqn:discrete-obj}:\eqref{eqn:discrete-constr}.
\end{theorem}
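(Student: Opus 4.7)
The plan is to match asymptotic expansions on both sides of the implicit pricing equation \eqref{eqn:imp-vol-discrete}, and then invert to solve for $\sigma_b$. On the model side, Theorem \ref{thm:approx-discrete-Asian} gives the small-$T$ expansion of $C^d(s_0, K, T)$; taking logarithms,
\begin{equation*}
\log C^d = -\frac{D(\bs^*,\mt)}{\Delta t} + \frac{3}{2}\log \Delta t + R_{\text{model}}(T),
\end{equation*}
where $R_{\text{model}}(T) = O(1)$ as $T \to 0^+$ and encodes the prefactor involving $|\nabla D|$, $\det \partial_{\mt}^2 D$, $W$, and its derivatives at $\bs^*$. On the Bachelier side, the expansion \eqref{eqn:small-time-Bachelier-d} yields
\begin{equation*}
\log C_b^d(K,T,\sigma_b) = -\frac{A_n(K-s_0)^2}{2\sigma_b^2 T} + \frac{3}{2}\log T + 3\log \sigma_b + R_{\text{Bach}}(T,\sigma_b),
\end{equation*}
with $R_{\text{Bach}}(T,\sigma_b) = O(1)$ uniformly for $\sigma_b$ bounded away from $0$ and $\infty$.

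The definition \eqref{eqn:imp-vol-discrete} sets these two quantities equal. Subtracting $\frac{3}{2}\log T$ from both sides (using $\Delta t = T/n$ so that the $\log$-terms only differ by a constant) and rearranging yields
\begin{equation*}
\frac{A_n(K-s_0)^2}{2\sigma_b^2 T} = \frac{n D(\bs^*,\mt)}{T} + 3\log \sigma_b + O(1).
\end{equation*}
Multiplying through by $T$, one obtains the key relation
\begin{equation*}
\frac{A_n(K-s_0)^2}{2\sigma_b^2} = n D(\bs^*,\mt) + O(T).
\end{equation*}
Since $n D(\bs^*,\mt) = \tfrac{n}{2}\sum_{i=1}^n |\varphi(s^*_{t_i},t_i) - \varphi(s^*_{t_{i-1}},t_{i-1})|^2$ is strictly positive and bounded for $s_0 < K$, solving for $\sigma_b^2$ gives a positive limit, and expanding the square root around this positive limit using $\sqrt{a + O(T)} = \sqrt{a} + O(T)$ produces the claimed expansion with $\sigma^d_{b,0}(K)$.

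The main technical obstacle will be the a priori control needed to justify the asymptotic inversion: one must show that any solution $\sigma_b^d(K,T)$ of \eqref{eqn:imp-vol-discrete} stays in a compact subset of $(0,\infty)$ as $T \to 0^+$, so that the $O(1)$ terms in $R_{\text{Bach}}$ (including $3\log \sigma_b$) are genuinely bounded and do not degenerate. This can be established by monotonicity of $C_b^d(K,T,\sigma_b)$ in $\sigma_b$ (a standard vega-positivity argument) together with upper and lower bounds on $C^d$ coming from \eqref{eqn:approx-discrete-Asian}: an upper bound on $\sigma_b^d$ follows from the lower bound $C^d \gtrsim T^{3/2} e^{-c/T}$ for some $c > 0$, while a lower bound on $\sigma_b^d$ follows from the corresponding upper bound. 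Once $\sigma_b^d$ is trapped in a compact set, the implicit-function/matching argument above goes through, and the $O(T)$ error in the final formula is inherited from the $O(T)$ error in the matching equation.
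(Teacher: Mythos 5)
Your proposal follows essentially the same route as the paper: the paper's own derivation (given in the displayed computation immediately preceding the theorem) simply matches the exponential factors $e^{-A_n(s_t-K)^2/(2\sigma_b^2 T)}$ and $e^{-D(\bs^*,\mathbf{t})/\Delta t}$ from \eqref{eqn:small-time-Bachelier-d} and Theorem \ref{thm:approx-discrete-Asian} and solves for $\sigma_b$. Your additional bookkeeping of the $\log$-terms and the a priori confinement of $\sigma_b^d$ to a compact subset of $(0,\infty)$ is a welcome refinement that the paper leaves implicit, but it does not change the underlying argument.
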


\subsubsection{\textbf{Implied normal volatility for continuously monitored Asian option}}
For continuously monitored Asian calls, the implied normal volatility is thus defined by solving the following equation for $\sigma_b$
\begin{equation}
C(s_t,K,T) = \frac{\sigma_b \sqrt{T-t}}{\sqrt{6\pi}}e^{-\frac{3(s_t - K)^2}{2\sigma_b^2(T-t)}} + (s_t - K) N\left( \frac{\sqrt3(s_t - K)}{\sigma_b\sqrt{T-t}} \right), \label{eqn:imp-vol-cts}
\end{equation}
where $C$ is the price of an Asian call obtained by the model or from the market. Notice that the right hand side of \eqref{eqn:imp-vol-cts} can be regarded as the price of a European call option in the Bachelier model but with one third of $\sigma_b^2$ as the variance (volatility squared) parameter. In other words, in the Bachelier world, the price of an Asian call is equal to the price of its European counterpart but with only one third of variance. By combining the asymptotics \eqref{eqn:cts-log-Asian-call} for the call price and \eqref{eqn:normal-vol-exp} for the implied normal volatility, a small time asymptotic for the implied normal volatility of an out-of-the-money Asian call is established. 
Alternatively, we may also obtain the same approximation by straightforwardly taking the limit as $n\to\infty$ of $\sigma^d_{b,0}$ in Theorem \ref{thm:imp-vol-approx-discrete}. We summarize the result in the following theorem.
\begin{theorem}(Implied normal volatility asymptotic for continuous Asian call) \label{thm:imp-vol-approx} \\
For a continuously monitored out-of-the-money Asian call struck at $K$, i.e., $s_0 < K$, in which the underlying is driven by the local volatility model \eqref{eqn:local-vol}, the implied normal volatility $\sigma_b$ defined in \eqref{eqn:imp-vol-cts} has the asymptotic expansion as $T \to 0^+$
\begin{equation}
\sigma_b(K,T) = \sigma_{b,0} + o(T),
\end{equation}
where
\begin{equation}
\sigma_{b,0} = \left( \frac T{3(K - s_0)^2} \int_0^T \left[ \frac{\dot{\tilde s}(t)}{a(\tilde s(t),t)}\right]^2 dt \right)^{-\frac12},
\end{equation}
and $\tilde s(t)$ is the most-likely-path for Asian option determined by solving the variational problem \eqref{eqn:var-obj}:\eqref{eqn:var-cons}.
\end{theorem}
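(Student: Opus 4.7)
The plan is to combine the small-time asymptotic for $\log C(s_0,K,T)$ from Theorem~\ref{thm:cts-Asian} with the Grunspan-type expansion of the Proposition, exploiting the identity noted after \eqref{eqn:imp-vol-cts}: in the Bachelier model the price of an Asian call struck at $K$ with volatility $\sigma_b$ equals the price of a \emph{European} call with the same strike and maturity but with volatility $\sigma_b/\sqrt{3}$, because $\frac{1}{T}\int_0^T S_\tau\,d\tau$ is Gaussian with variance $\sigma_b^2 T/3$. This reduces the problem to one for which the cited expansion is directly applicable.

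Concretely, I would first invoke the Proposition with $\sigma_N=\sigma_b/\sqrt{3}$ and the call price identified with $C(s_0,K,T)$, obtaining
\begin{equation*}
\frac{\sigma_b^2(K,T)\,T}{3} \;=\; \frac{(s_0-K)^2}{2\bigl(\log s_0-\log C(s_0,K,T)\bigr)} + o\bigl(\log C(s_0,K,T)\bigr)
\end{equation*}
as $T\to 0^+$. I would then substitute the leading-order MLP expression $\log C(s_0,K,T)=-\tfrac12\int_0^T|\dot{\tilde s}(\tau)/a(\tilde s(\tau),\tau)|^2\,d\tau+o(T^{-1})$ from Theorem~\ref{thm:cts-Asian}. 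Because $-\log C(s_0,K,T)\to\infty$ as $T\to 0$, the bounded term $\log s_0$ is absorbed into the error, and inverting the displayed identity yields $\sigma_b^2(K,T)= 3(K-s_0)^2\,\bigl(T\int_0^T|\dot{\tilde s}/a(\tilde s,\cdot)|^2\,d\tau\bigr)^{-1}$ plus corrections, which is exactly $\sigma_{b,0}^2$.

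As a cross-check, the same limit can be reached by letting $n\to\infty$ in Theorem~\ref{thm:imp-vol-approx-discrete}: the normalizing constant $A_n=6n^2/((n+1)(2n+1))\to 3$, and the Riemann-sum-type quantity $n\sum_{i=1}^n|\varphi(s_{t_i}^*,t_i)-\varphi(s_{t_{i-1}}^*,t_{i-1})|^2$ converges to $T\int_0^T|\dot{\tilde s}(\tau)/a(\tilde s(\tau),\tau)|^2\,d\tau$ by the very computation used to pass from \eqref{eqn:discrete-obj} to the variational problem \eqref{eqn:var-obj}:\eqref{eqn:var-cons}. I would include both derivations to corroborate the formula.

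The main obstacle is not the identification of $\sigma_{b,0}$, which follows at once by matching exponentials, but rather the bookkeeping needed to certify the sharp $o(T)$ remainder: the Grunspan expansion supplies an error that is $o(\log C)$, i.e., $o(T^{-1})$ after scaling, and this must be shown not to dominate the leading $\sigma_{b,0}$ after inversion. Rigorously controlling this would rely on the finer next-order Laplace term from Theorem~\ref{thm:approx-discrete-Asian} (passed through the $n\to\infty$ route) together with the large deviation framework of Section~\ref{sec:LDP}; at the matched-asymptotic level of this section, the leading-order identification of $\sigma_{b,0}$ is immediate.
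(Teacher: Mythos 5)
Your proposal matches the paper's own proof: the authors likewise apply the Grunspan expansion \eqref{eqn:normal-vol-exp} with the observation that the Asian--Bachelier price equals a European--Bachelier price with variance $\sigma_b^2 T/3$, substitute the MLP asymptotic \eqref{eqn:cts-log-Asian-call} for $\log C(T)$, absorb $\log s_0$ into the error, and rearrange; they also mention your cross-check via the $n\to\infty$ limit of Theorem~\ref{thm:imp-vol-approx-discrete} as an alternative route. Your remark that the $o(T)$ remainder is not fully certified by this matched-asymptotics argument is a fair observation that applies equally to the paper's proof.
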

\begin{proof}
Recall from \eqref{eqn:normal-vol-exp} that the implied normal volatility $\sigma$ of a European option has the asymptotic
\beaa 
&& \sigma^2 T = \frac{(s_0 - K)^2}{2(\log s_0 - \log C(T))} + o(\log C(T)).
\eeaa
Thus, by substituting $\log C(T)$ with \eqref{eqn:cts-log-Asian-call} and using the fact that Asian variance equals one third of its European counterpart in Bachelier world, we obtain for implied normal volatility $\sigma_b$ of an Asian call
\beaa
\frac{\sigma_b^2 T}3 &=& \frac{(s_0 - K)^2}{2(\log s_0 - \log C(T))} + o(\log C(T)) \\
&=& -\frac{(K - s_0)^2}{2\log C(T)} + o(\log C(T)) \\
&=& (K - s_0)^2\left[\int_0^T \left[\frac{\dot{\tilde s}(t)}{a(\tilde s(t), t)}\right]^2 dt \right]^{-1} + o(T)
\eeaa
where we used \eqref{eqn:cts-log-Asian-call} in the last equality. Finally, the result is obtained by rearranging terms.
\end{proof}

\begin{example}{(Time-dependent Bachelier model)}
{\rm
Finally, we consider a time-dependent Bachelier model in which 
\begin{equation}
  d S_t = \sigma\,\theta(t)\,dW_t.
\end{equation}
Note that in this case $a(s,t) = \sigma \theta(t)$, the Euler-Lagrange equation \eqref{ode} reduces to 
\beaa
&& \frac{d}{dt}\left(\frac{\dot s}{\sigma\theta(t)}\right) - \frac{\theta'(t) \dot s}{\sigma \theta^2(t)} + \frac\lambda T \sigma \theta(t) = 0 \\
&\Longrightarrow& \frac1{\theta(t)} \frac{d}{dt}\left(\frac{\dot s}{\theta(t)}\right) - \frac{\theta'(t)}{\theta^2(t)}\,\frac{\dot s}{\theta(t)} + \frac\lambda T \sigma^2 = 0
\eeaa
Integrating the last equation subject to the condition $\dot s(T)=0$ gives
\bea
 \frac{\dot s(t)}{\theta^2(t)} = \frac\lambda T \sigma^2(T-t) \label{eqn:sdot-td-B}
\eea
for some constant $\lambda$ set using the condition
\begin{equation}
\frac{1}{T}\,\int_0^T\,s(t)\,dt = K.
\end{equation}
This gives
\begin{equation}
\lambda = \frac{T^2(K-s_0)}{\sigma^2\,\int_0^T (T-u)^2 \theta^2(u)du}.
\end{equation}
Also, from \eqref{eqn:sdot-td-B} one easily obtain
\begin{equation}
  \int_0^T \left|\frac{\dot s(t)}{a(s,t)}\right|^2 dt = \int_0^T \left|\frac{\dot s(t)}{\sigma\theta(t)}\right|^2 dt 
  = \frac{\lambda^2 \sigma^2}{T^2}\,\int_0^T\,(T-t)^2 \theta^2(t)\,dt = \frac{T^2(K-s_0)^2}{\sigma^2 \int_0^T (T-t)^2\theta^2(t) dt}.
  \end{equation}
This gives
\beq
\sigma_{b,0}^2 = \frac{3\,\sigma^2}{T^3}\,\int_0^T\,(T - t)^2\,\theta^2(t)\,dt.
\label{eq:tdBachelier}
\eeq
On the other hand, we have $\bar S_T = 1/T \,\int_0^TS_t\,dt$ where
 \begin{equation}
S_t =s_0 +\sigma\, \int_0^t\,\theta(u)\,dW_u.
\end{equation}
A straightforward computation then yields
\begin{equation}
\var\left[ \bar S_T \right] = \frac{\sigma^2}{T^2}\,\int_0^T\,(T-u)^2\,\theta(u)^2\,du
\end{equation}
so the {\em exact} Bachelier implied volatility in the time-dependent Bachelier case is given by
\begin{equation}
\sigma_b^2 = \frac{3\,\sigma^2}{T^3}\,\int_0^T\,(T-u)^2\,\theta(u)^2\,du.
\end{equation}
Thus, the most-likely-path approximation is exact in the time-dependent Bachelier case.
}
\end{example}


\subsection{Approximation of Greeks}
The approximation of implied volatility in Theorem \ref{thm:imp-vol-approx} is also applicable for approximations of Greeks. For example, we may calculate the delta as follows. For notational simplicity, denote by $v_b := \sigma_b^2 T$. Suppressing and holding the other parameters fixed, since the Bachelier implied volatility is defined through
\beaa
C(s) = C_b(s, v_b), 
\eeaa
the delta $\Delta$ satisfies
\beaa
\Delta := \frac{\p C}{\p s} = \frac{\p C_b}{\p s} + \frac{\p C_b}{\p v_b} \frac{\p v_b}{\p s},
\eeaa
where $C_b$ is the function defined in \eqref{eqn:imp-vol-cts}.
Note that by straightforward calculations we have 
\beaa
&& \frac{\p C_b}{\p s} = N\left(\frac{\sqrt3 (s - K)}{\sqrt{v_b}}\right) \quad \mbox{ and } \quad 
 \frac{\p C_b}{\p v_b} = \frac1{2\sqrt{6\pi v_b}} e^{-\frac{3(s - K)^2}{2v_b}}.
\eeaa
Thus, with $v_b \approx v_{b,0} := \sigma_{b,0}^2 T$, it follows that   
\begin{equation}
\Delta \approx N\left(\frac{\sqrt3 (s - K)}{\sqrt{v_{b,0}}}\right) + \frac1{2\sqrt{6\pi v_{b,0}}} e^{-\frac{3(s - K)^2}{2v_{b,0}}} \frac{\p v_{b,0}}{\p s}. \label{eqn:approx-delta}
\end{equation}
The expressions on the right hand side of \eqref{eqn:approx-delta} can be calculated easily except the last term which can be calculated as follows. Recall that 
\begin{equation}
v_{b,0} = 3(K - s)^2 \left[\int_0^T \left[\frac{\dot{\tilde s}(t)}{a(\tilde s(t), t)}\right]^2 dt \right]^{-1},
\end{equation} 
we have
\bea
\frac{\p v_{b,0}}{\p s} &=& 6(s - K) \left[\int_0^T \left[\frac{\dot{\tilde s}(t)}{a(\tilde s(t), t)}\right]^2 dt \right]^{-1} \nonumber \\
&& - 3(K - s)^2 \left[\int_0^T \left[\frac{\dot{\tilde s}(t)}{a(\tilde s(t), t)}\right]^2 dt \right]^{-2} \, \frac{\p}{\p s} \int_0^T \left[\frac{\dot{\tilde s}(t)}{a(\tilde s(t), t)}\right]^2 dt \nonumber \\
&=& \frac{2 v_{b,0}}{s - K} - \frac{v_{b,0}^2}{3(K - s)^2} \, \frac{\p}{\p s} \int_0^T \left[\frac{\dot{\tilde s}(t)}{a(\tilde s(t), t)}\right]^2 dt. \label{eqn:vb0-ps}
\eea 
The integral in \eqref{eqn:vb0-ps} in the general case needs to be evaluated numerically. 
Similarly, since the gamma $\Gamma$ satisfies
\beaa
&& \Gamma := \frac{\p \Delta}{\p s} = \frac{\p^2 C_b}{\p s^2} + 2 \frac{\p^2 C_b}{\p s \p v_b} \frac{\p v_b}{\p s} + \frac{\p^2 C_b}{\p v_b^2} \left(\frac{\p v_b}{\p s}\right)^2 + \frac{\p C_b}{\p v_b} \frac{\p^2 v_b}{\p s^2},
\eeaa
an approximation of $\Gamma$ is given by substituting $v_b$ with $v_{b,0}$, subject to numerically evaluations of the derivatives $\p_s v_{b,0}$ and $\p_s^2 v_{b,0}$. 
However, we remark that in the Black-Scholes (Example \ref{ex:bs}) and the CIR (Example \ref{ex:cir}) models, since the $\sigma_{b,0}$'s have closed form expressions, closed expressions for Greeks are available. 
We refer the reader to \cite{pirjol-zhu-1} for more detailed discussions on Greeks of Asian options in the Black-Scholes model.

%
%

\section{Large deviation principle} \label{sec:LDP}
In this section, we prove Theorem \ref{thm: ld} which is a large deviation reformulation of Theorem \ref{thm:cts-Asian} for continuously monitored options.
Finer tools would be needed to go beyond the leading order of the asymptotic expansion in Theorem \ref{thm:approx-discrete-Asian} suggested by the discrete approximation.
However, in particular cases, these terms are given  by a Girsanov change of measure.
We will illustrate this using the Bachelier model at the end of the section.
A more refined expansion for more general processes does not appear tractable with current methods.
\begin{theorem}[Large deviation for the log-price of continuous monitored Asian call]
\label{thm: ld}
The price $C(s_0,0;K,T)$ at time $t=0$ of a continuously monitored out-of-the-money Asian call struck at $K> s_0$ with expiry time $T$
for the price process $(S_t,t\geq 0)$ of \eqref{eqn:local-vol}
admits the following expansion in $T$
\begin{equation}
\label{eqn: first order}
 C(s_0,0;K,T)
 = \exp \left\{ -\frac{J(K)}{T} + o\left(T^{-1}\right) \right\}
\end{equation}
 where
\begin{equation}
 J(K)=\inf\left\{I(f): f\in \mathcal C([0,T]) \text{ and } \int_0^T f(u)=K \right\}\ .
 \end{equation}
 Moreover, if $f(t)=x+\int_0^t a(f(s),s)g(s)ds$ for some $g\in L^2([0,T])$ then
 \begin{equation}
 \label{eqn: rate fct}
I(f)= \frac{1}{2}\int_0^T \left(\frac{\dot{f}(t)}{a(f(t),0)}\right)^2 dt \ ,
 \end{equation}
 and $I(f)$ is $\infty$ otherwise.
\end{theorem}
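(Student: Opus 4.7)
The plan is to rescale time to convert the $T\to 0$ asymptotic on $[0,T]$ into a small-noise asymptotic on the fixed interval $[0,1]$, and then invoke the Freidlin--Wentzell-type LDP of \cite{chiarini-fischer} followed by the contraction principle applied to the Asian averaging functional. Setting $u=t/T$ and $\tilde S_u := S_{Tu}$, It\^o's formula turns \eqref{eqn:local-vol} into
\begin{equation}
d\tilde S_u = \sqrt{T}\,a(\tilde S_u, Tu)\,dB_u,\qquad \tilde S_0 = s_0,
\end{equation}
for the Brownian motion $B_u = W_{Tu}/\sqrt T$. Thus $\varepsilon=\sqrt T$ plays the role of the small-noise parameter and the coefficient $a(\cdot,Tu)$ converges locally uniformly to $a(\cdot,0)$. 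Under the linear growth \eqref{eqn: bounded} and local Lipschitz \eqref{eqn: lipschitz} assumptions the hypotheses of \cite{chiarini-fischer} are met, and so $\tilde S$ satisfies an LDP on $C([0,1])$ with speed $1/T$ and good rate function $\tilde I(\tilde f)=\tfrac12\int_0^1(\dot{\tilde f}(u)/a(\tilde f(u),0))^2\,du$ for absolutely continuous $\tilde f$ with $\tilde f(0)=s_0$, and $+\infty$ otherwise. This matches the rate function \eqref{eqn: rate fct} under the bijection $f(t)=\tilde f(t/T)$.

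I would next transfer the LDP to the Asian average. The functional $\Phi:C([0,1])\to\R$, $\tilde f\mapsto \int_0^1 \tilde f(u)\,du$, is continuous, so the contraction principle yields
\begin{equation}
\lim_{T\to 0} T\,\log \P(\bar S_T \geq K) = -\tilde J(K),\qquad \bar S_T := \tfrac1T\int_0^T S_t\,dt,
\end{equation}
with $\tilde J(K)=\inf\{\tilde I(\tilde f):\int_0^1\tilde f(u)\,du\geq K,\ \tilde f(0)=s_0\}$. To lift this tail estimate to the expectation $C=\E[(\bar S_T-K)^+]$ I would sandwich it: for any $\varepsilon>0$ and $M>K$,
\begin{equation}
\varepsilon\,\P(\bar S_T\geq K+\varepsilon) \;\leq\; C \;\leq\; (M-K)\,\P(\bar S_T\geq K)+\E[\bar S_T\,\mathbf 1_{\bar S_T\geq M}].
\end{equation}
The remainder term is controlled by Cauchy--Schwarz together with a uniform-in-$T$ bound on $\E[\sup_{u\in[0,1]}\tilde S_u^{\,2}]$, obtained from Burkholder--Davis--Gundy and Gronwall's inequality using the sublinearity of $a$ from \eqref{eqn: bounded}. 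Taking logarithms, multiplying by $T$, letting $T\to 0$ and then $\varepsilon\to 0$ produces $\lim_{T\to 0} T\log C = -\tilde J(K)$, which is exactly \eqref{eqn: first order} once we identify $J(K)$ with $\tilde J(K)$.

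To tie back to the variational characterization of Section~\ref{subsection: continuous}, I would show that the infimum defining $\tilde J(K)$ is attained: the set of $\tilde f$ satisfying $\tilde I(\tilde f)\leq L$ is precompact in $C([0,1])$ by Arzel\`a--Ascoli (via the Cauchy--Schwarz bound $|\tilde f(u)-\tilde f(v)|\leq \|a\|_{\infty,L}\sqrt{2L}\,\sqrt{|u-v|}$ on bounded sublevel sets), and $\tilde I$ is lower semicontinuous, so a minimizing sequence under the continuous linear constraint $\Phi(\tilde f)=K$ converges along a subsequence to a minimizer. The Euler--Lagrange equation for the minimizer coincides, after undoing the time change, with the ODE \eqref{ode}, so the LDP optimizer is the MLP of Lemma~\ref{lem:2.2}.

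The main obstacle is establishing the LDP itself in the regime considered here. Classical Freidlin--Wentzell asks for uniform ellipticity and globally Lipschitz coefficients, whereas the assumptions \eqref{eqn: bounded}--\eqref{eqn: lipschitz} only enforce linear growth and local Lipschitz continuity and explicitly allow $a(0,t)=0$, as in the CIR example. Verifying that the extended framework of \cite{chiarini-fischer} applies here---in particular the exponential tightness required for the upper bound and the absence of any blow-up or exit from the positive half-line on the relevant time scale---is where the technical weight sits. Secondary but still nontrivial steps are the uniform moment estimate used in the upper sandwich bound and the continuity of the limit $T\tilde a(\cdot,T\cdot)\to a(\cdot,0)$ at the level of the action functional, both of which rely crucially on \eqref{eqn: bounded}.
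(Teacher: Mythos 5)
Your overall route is the same as the paper's: rescale time so that $T$ becomes a small-noise parameter on $[0,1]$, invoke the Chiarini--Fischer extension of Freidlin--Wentzell, push the LDP through the averaging functional by the contraction principle, and sandwich $C=\E[(\bar S_T-K)^+]$ between tail probabilities. Your upper bound is handled slightly differently (a moment estimate via BDG and Gronwall controlling $\E[\bar S_T\mathbf 1_{\bar S_T\ge M}]$, versus the paper's splitting of $\int_K^\infty \P(\bar S_T>x)\,dx$ at $M=\e^{-1}$), but both are serviceable and yield $\limsup_{T\to 0} T\log C\le -J(K)$.

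There is, however, one genuine gap in your lower bound. The inequality $C\ge \varepsilon\,\P(\bar S_T\ge K+\varepsilon)$ gives $\liminf_{T\to 0}T\log C\ge -\tilde J(K+\varepsilon)$, and your closing step ``letting $T\to 0$ and then $\varepsilon\to 0$'' silently assumes $\lim_{\varepsilon\to 0^+}\tilde J(K+\varepsilon)=\tilde J(K)$. This is \emph{not} automatic: lower semicontinuity of the rate function only gives $\liminf_{\varepsilon\to 0}\tilde J(K+\varepsilon)\ge \tilde J(K)$, which is the wrong direction, and your Arzel\`a--Ascoli argument for attainment of the infimum does not supply the missing inequality either. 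The paper devotes the last third of its proof precisely to this point: it takes a near-minimizing sequence $(y_n)$ for $J(K)$, fixes a differentiable $z$ with $\int_0^1 z=1$ and $z(0)=0$, and shows $I(y_n+\delta z)\to I(y_n)$ as $\delta\to 0$, whence $\limsup_{\delta\to 0}J(K+\delta)\le J(K)$. You need some version of this perturbation argument (and for general $a$ it uses the continuity and positivity of $a$ along the perturbed paths) to complete the proof. A smaller, shared imprecision: both you and the paper pass between the constraint $\int_0^1\tilde f=K$ and $\int_0^1\tilde f\ge K$ and between open and closed tail sets; for $K>s_0$ this is harmless but deserves a line.
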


The idea to prove the theorem is to treat $\e=T$ as a small parameter and  expand around $\e=0$.
We do a simple time-change $t=u T$ and write for simplicity
\begin{equation}
S^{\e}:=(S^{\e}_{u}, u\in [0,1])=(S_{uT} , u\in [0,1])\ .
\end{equation}
Note that with this notation:
\begin{equation}
\frac{1}{T}\int_0^T S_t dt= \int_0^1 S^\e_{u} \ du\ .
\end{equation}
By the scaling property of Brownian motion, the process $(S^\e_{u}, u\in [0,1])$ satisfies the SDE
\begin{equation}
\label{eqn: family}
dS^\e_{u}= a_\e(S^\e_{u},u) \ \sqrt{\e} dW_u\ ,
\end{equation}
where $a_\e(x,u) := a(x,u\e)$. Since $a(x,\cdot)$ is assumed to be continuous uniformly in $x$, we can write
\begin{equation}
a(x,u\e)=a(x,0) + \mathcal O(\e)\ .
\end{equation}

A {\it large deviation principle} for the process means that
there exists a  {\it rate function} $I:\mathcal C([0,T])\to [0,\infty]$ that is lower semi-continuous and has compact level sets such that
 for any Borel subset $A$ of paths in $\mathcal C([0,1])$,
 \begin{equation}
 \label{eqn: limits ld}
 -\inf_{s\in \text{int(A)}} I(s)\leq \liminf_{\e\to 0} \e \log \PP(S^\e\in A)\leq \limsup_{\e\to 0}\e \log \PP(S^\e\in A)\leq -\inf_{s\in \text{cl(A)}} I(s)
 \end{equation}
 where $\text{int($A$)}$ denotes the interior of $A$ and $\text{cl($A$)}$ is its closure.
 Roughly speaking, a large deviation principle quantifies the probability of atypical path at the exponential scale with the help of the rate function.
The proof of Theorem \ref{thm: ld} is based on an extension of the Freidlin-Wentzell theorem, see e.g. \cite{dembo-zeitouni}.
The standard statement of the theorem holds for time-homogeneous drift and volatility under Lipschitz and boundedness assumptions.
 Here we will use a recent result of \cite{chiarini-fischer}.
Note that the locally Lipschitz condition \eqref{eqn: lipschitz} is not satisfied in the CIR model.
However, in this case, the volatility is time-homogeneous and a weaker assumption is needed for a large deviation to hold.
This is the content of Theorem 4 in \cite{chiarini-fischer}.

\begin{theorem}[Theorem 2, Theorem 4 and Example 1 in \cite{chiarini-fischer}]
\label{thm: ldp diffusion}
The family of diffusions \eqref{eqn: family} satisfy a large deviation principle with rate function
 \begin{equation}
I(f)= \inf_{\{g\in L^2([0,1]): f(t)=x+\int_0^t a(f(s),0)g(s) ds\}} \frac{1}{2}\int_0^T |g(t)|^2 dt\ ,
 \end{equation}
 whenever the set $\{g\in L^2([0,1]): f(t)=x+\int_0^t a(f(s),0)g(s) ds\}$ is non-empty and $I(f)$ is $\infty$ otherwise.
 \end{theorem}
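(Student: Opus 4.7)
The statement is attributed directly to Theorems~2 and~4 and Example~1 of \cite{chiarini-fischer}, so the role of this proof is not to re-derive a large deviation principle from scratch but rather to verify that the family \eqref{eqn: family} falls within the scope of those results, to extract the correct rate function, and to handle the degenerate cases that arise in our models of interest (Black--Scholes and CIR).

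The plan has three parts. First, I would observe that $S^\epsilon_u = s_0 + \sqrt{\epsilon}\int_0^u a_\epsilon(S^\epsilon_r,r)\,dW_r$ with $a_\epsilon(x,u) = a(x,u\epsilon)$ is precisely a small-noise diffusion of the Freidlin--Wentzell type, perturbed by a slowly varying time-dependence. Because $a(x,\cdot)$ is continuous in $t$ uniformly for $x$ in bounded sets (this uniformity follows from the locally Lipschitz assumption \eqref{eqn: lipschitz} together with the linear growth assumption \eqref{eqn: bounded}), the coefficient $a_\epsilon$ converges to the frozen coefficient $a(\cdot,0)$ uniformly on compact subsets of $\R \times [0,1]$ as $\epsilon \to 0$. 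This is the key structural fact that forces the rate function to involve $a(f(s),0)$ rather than $a(f(s),s)$.

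Second, I would match the hypotheses of the cited theorems. Under \eqref{eqn: bounded} and \eqref{eqn: lipschitz}, Theorem~2 of \cite{chiarini-fischer} applies directly: the linear growth provides non-explosion and tightness in $\mathcal C([0,1])$, which is needed for the upper bound, while local Lipschitzness gives continuity of the skeleton equation $\dot f = a(f,0)\,g$ with respect to $g$ in the appropriate weak topology, which is needed for the lower bound. For degenerate cases such as CIR where local Lipschitzness fails at $s=0$, I would invoke their Theorem~4 together with Example~1, which relax the Lipschitz assumption to a H\"older-type condition in the time-homogeneous case by leveraging Yamada--Watanabe pathwise uniqueness for $a(x,0) = \sigma\sqrt{x}$.

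Third, the rate function is identified by the contraction principle applied to Schilder's theorem. The rescaled noise $\sqrt{\epsilon}\,W$ satisfies a large deviation principle on $\mathcal C([0,1])$ with good rate function $\tfrac{1}{2} \int_0^1 |\dot\eta(s)|^2\,ds$ supported on the Cameron--Martin space. Under our assumptions the Ito-map $\eta \mapsto f$ determined by $f(t) = s_0 + \int_0^t a(f(s),0)\,\dot\eta(s)\,ds$ is continuous on the level sets of the Schilder rate function, so setting $g = \dot\eta$ and contracting yields the claimed formula, with $I(f) = +\infty$ precisely when no such control $g \in L^2([0,1])$ represents $f$. The main technical obstacle is the boundary behavior at $s=0$ where $a$ is allowed to vanish: the standard Freidlin--Wentzell argument requires uniform non-degeneracy, and it is precisely the delicate approximation and localization arguments in \cite{chiarini-fischer} that allow this to be bypassed. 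Our work here reduces to checking that the specific degeneracies of Black--Scholes ($a = \sigma s$) and CIR ($a = \sigma\sqrt{s}$) fall under the scope of their Example~1.
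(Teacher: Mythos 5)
The paper states this result purely as a citation to \cite{chiarini-fischer} (Theorems 2 and 4 and Example 1) and offers no proof beyond noting that the locally Lipschitz condition fails for the CIR model but that Theorem 4 there covers the time-homogeneous degenerate case; your proposal takes essentially the same route of hypothesis-checking plus the contraction/Freidlin--Wentzell mechanism, so it is consistent with what the paper does. One small caveat: the uniform-in-$x$ continuity of $a(x,\cdot)$ in $t$ does \emph{not} follow from \eqref{eqn: lipschitz} and \eqref{eqn: bounded} as you claim --- the paper simply assumes it separately when writing $a(x,u\e)=a(x,0)+\mathcal O(\e)$.
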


 Formally, it is good to think of the set $\{g\in L^2([0,T]): f(t)=x+\int_0^t a(f(s),0)g(s) ds\}$ as the set of ``white-noise paths" $t\mapsto \dot{W}_t$.
 The map $g\mapsto f$ where $f$ is the solution of $f(t)=x+\int_0^t \bar{a}(f(s),s)g(s) ds$ can then be thought of as the map sending an underlying Brownian path
 to the corresponding diffusion path. In the case where the volatility is non-zero for the path $f$, the map can be inverted and the rate function reduces to the
 simplest case
 \begin{equation}
 \label{eqn: I simple}
 I(f)= \frac{1}{2}\int_0^1 \left(\frac{\dot{f}(t)}{a(f(t),0)}\right)^2 dt\ .
 \end{equation}
This is certainly the case for geometric Brownian motion and the CIR model.
The reader is referred to Theorem 1 in \cite{chiarini-fischer} and Proposition 3.11 in \cite{baldi-caramellino} for general sufficient conditions for \eqref{eqn: I simple} to hold.

\begin{proof}[Proof of Theorem \ref{thm: ld}]
Observe that for any random variable $X$ and $K>0$, we have the identity
\begin{equation}
\label{eqn: expect}
\E[(X-K)^+]=\int_K^\infty \PP(X>x) \ dx.
\end{equation}
For simplicity, write $T$ for the functional $T: \mathcal C([0,1])\to \R$ with $T(f)= \int_0^1 f(u) du$.
Note that $T$ is continuous on  $\mathcal C([0,1])$ equipped with the topology of uniform convergence.
By the contraction principle (see e.g. \cite{dembo-zeitouni}) and Theorem \ref{thm: ldp diffusion}, the family of random variables $( T(S^\e), \e >0)$ satisfies a large deviation principle with rate function $J: \R\to [0,\infty]$ with
\begin{equation}
J(y)=\inf\{ I(f): T(f) =y \ , f\in \mathcal{C} ([0,1])\}\ , y\in \R\ .
\end{equation}
By \eqref{eqn: expect}, we have
\begin{equation}
\log \E\left[\left(T(S^\e) -K\right)^+\right]
=\log \int_K^\infty \PP(T(S^\e)>x) \ dx.
\end{equation}
On one hand, we have for any $M>K$
\begin{equation}
\begin{aligned}
&\log \int_K^\infty \PP(T(S^\e)>x) \ dx\\
&= \left(\log  \int_K^M \PP(T(S^\e)>x) \ dx\right) + \log \left(1+ \frac{\log  \int_M^\infty \PP(T(S^\e)>x) \ dx}{\log  \int_K^M \PP(T(S^\e)>x) \ dx}\right)\ .
\end{aligned}
\end{equation}
For $M$ large enough (possibly dependent on $\e$), the term in the second parenthesis is smaller than $2$.
Pick $M=\e^{-1}$. For $\e$ small enough we thus have
\begin{equation}
\log \int_K^\infty \PP(T(S^\e)>x) \ dx
\leq \log 2 + \log (\e^{-1}-K)+\log \PP(T(S^\e)>K)\ .
\end{equation}
This proves that
\begin{equation}
\limsup_{\e\to 0}\e \log \int_K^\infty \PP(T(S^\e)>x) \ dx
\leq \lim_{\e\to 0} \e \log \PP(T(S^\e)>K)=-J(K)\ ,
\end{equation}
since $(T(S^\e), \e>0)$ satisfies a large deviation with rate function $J$.

On the other hand, for $\delta>0$
\begin{equation}
\int_K^\infty \PP(T(S^\e)>x) \ dx
\geq  \int_K^{K+\delta} \PP(T(S^\e)>x) \ dx
\geq \delta \PP(T(S^\e)>K+\delta)\ .
\end{equation}
Therefore, for any $\delta >0$, the above with the use of \eqref{eqn: limits ld} becomes
\begin{equation}
\liminf_{\e\to 0}\e \log \int_K^\infty \PP(T(S^\e)>x) \ dx
\geq
-J(K+\delta)\ .
\end{equation}
It remains to show that $\lim_{\delta\to 0}J(K+\delta)=J(K)$.
First, notice that $\liminf_{\delta \to 0} J(K+\delta)\geq J(K)$ since $J$ is lower semi-continuous (it is a rate function).
So it suffices to show that  $\limsup_{\delta \to 0} J(K+\delta)\leq J(K)$.
By definition, $J(K+\delta)=\inf\{I(x):x\in \mathcal{C}([0,1]) \text{ and } \int_0^1 x(u)du=K+\delta\}$.
Pick a sequence $(y_n)\in \mathcal C([0,1]) $ such that $I(y_n)\to J(K)$ and $\int_0^1 y_n(u)du =K$.
By definition of the infimum, this sequence can be picked such that $I(y_n)<J(K)-1/n$.
Pick $z$ a differentiable function on $[0,1]$ such that $\int z=1$ and $z(0)=0$.
Then $\int (y_n(u)+\delta z(u))du=K+\delta$. Moreover
\begin{equation}
J(K+\delta)< I(y_n+\delta z)\ .
\end{equation}
It is easy to check that for fixed $n$, $\lim_{\delta\to 0}I(y_n+\delta z)=I(y_n)$. Therefore
\begin{equation}
\limsup_{\delta \to 0}J(K+\delta)< I(y_n)<J(K)-1/n\ .
\end{equation}
Since $n$ is arbitrary, this proves the claim.
\end{proof}

The asymptotic expansion \eqref{eqn:cts-log-Asian-call}
suggests that the lower order corrections to \eqref{eqn: first order} are much smaller than the dominant term in $T^{-1}$.
We expect more precisely that
\begin{equation}
\label{eqn: lower order}
 \E\left[\left(\frac{1}{T}\int_0^T S_{u}\ du -K\right)^+\right]
 =\exp\left\{-\frac{J(K)}T  + \frac{3}{2}\log T +\mathcal O(1)\right\}\ .
 \end{equation}
 This can be verified rigorously for the Bachelier model (Example 1) using a Girsanov change of measure designed to tilt towards the {\it most-likely path} minimizing the rate function $J$ for a given $K$. The correction is expected to be of the same order for other models as well as for other types of options.

 For the Bachelier model, the diffusion on $[0,1]$ is $dS_u^\e=\sqrt{\e} dW_u$ and the most-likely path \eqref{eqn: MLP bachelier} when written as a path on $[0,1]$ is
 \begin{equation}
 s(u)=-\frac{3(K-s_0)}{2}u^2 + 3(K-s_0) u + s_0\ , \ u\in[0,1]\ .
 \end{equation}
 By \eqref{eqn: expect}, the price of the call option becomes
 \begin{equation}
 \label{eqn: price bachelier}
 \E\left[\left(\int_0^1 S_u^\e du -K\right)^+\right]=\int_K^\infty \PP\left(\int_0^1 S_u^\e \ du >x\right) dx\ .
 \end{equation}
Write $\Q$ for the measure with $d\Q/d\PP = \exp\left(\e^{-1/2}\int_0^1 \dot{s}(u)dW_u -\frac{\e^{-1}}{2}\int_0^1 \dot{s}(u)^2du\right)$.
Under $\Q$, the process $(W_u,u\in[0,1])$ is a Brownian motion with drift $\e^{-1/2}\dot{s}(u)$.
With this notation, \eqref{eqn: price bachelier} becomes
\begin{equation}
e^{ -\frac{\e^{-1}}{2}\int_0^1 \dot{s}(u)^2du}\int_K^\infty \E_{\Q}\left[e^{-\e^{-1/2}\int_0^1 \dot{s}(u)d\widetilde W_u}1_{\{\int_0^1 \e^{1/2}\widetilde W_u \ du + K>x\}}\right] dx
\end{equation}
where $(\widetilde W_u,u\in[0,1])$ is a standard Brownian motion under $\Q$ and we use the fact that $\int_0^1 s(u) du=K$ by definition of the most likely path.
By doing the change of variable $y=x-K$, this reduces to
\begin{equation}
 \E\left[\left(\int_0^1 S_u^\e du -K\right)^+\right]=e^{ -\frac{\e^{-1}}{2}\int_0^1 \dot{s}(u)^2du}\int_0^\infty \E_{\Q}\left[e^{-\e^{-1/2}\int_0^1 \dot{s}(u)d\widetilde W_u}1_{\{\int_0^1 \e^{1/2}\widetilde W_u \ du >x\}}\right] dx\ .
\end{equation}
The first term is $e^{-\e^{-1}J(K)}$ and gives the first order. To evaluate the second term, it is convenient to first integrate $x$ to get
\begin{equation}
\E_{\Q}\left[e^{-\e^{-1/2}\int_0^1 \dot{s}(u)d\widetilde W_u}
\left(\int_0^1 \e^{1/2}\widetilde W_u \ du \right)^+\right]\ .
\end{equation}
Note that $\dot{s}(u)=3(K-s_0)(1-u)$, therefore
\begin{equation}
\int_0^1 \dot{s}(u)d\widetilde W_u=3(K-s_0)\int_0^1 \widetilde{W}_u du\ .
\end{equation}
Write $X$ for the random variable $\int_0^1 \widetilde{W}_u du$ which is Gaussian with mean $0$ and variance $1/3$. We have
\begin{equation}
 \e^{1/2}\E_{\Q}\left[X^+\ e^{-\e^{-1/2} 3(K-s_0) X}\right]
 =\e^{3/2}\int_0^\infty y e^{-3(K-s_0) y}\ \frac{e^{-\frac{3\e y^2}{2}}}{\sqrt{2\pi/3}}dy\ .
\end{equation}
The integral is of order $1$. We conclude that
\begin{equation}
 \E\left[\left(\frac{1}{T}\int_0^T S_{u}\ du -K\right)^+\right]
 =\exp\Big(-T^{-1}J(K)+\frac{3}{2}\log T + \mathcal O(1)\Big)\ .
\end{equation}

%
%

\section{Numerical tests} \label{sec:numerics}
From Theorem \ref{thm:imp-vol-approx}, we have the following approximate formula for Asian Bachelier implied volatility:
\begin{equation}
\sigma_{b,0} = \left( \frac T{3(K - s_0)^2} \int_0^T \left[ \frac{\dot{\tilde s}(t)}{a(\tilde s(t),t)} \right]^2dt \right)^{-\frac12},
\label{eq:approxVol}
\end{equation}
where $\tilde s(t)$ is the most-likely-path for an Asian option determined by solving the variational problem
\begin{equation}
  \min_{s:\ t\mapsto s(t)} \frac12 \int_0^T \left[ \frac{\dot s(t)}{a(s(t),t)}  \right]^2 dt
\end{equation}
subject to
\begin{equation}
  \frac1T \int_0^T s(t) dt = K, \quad s(0) = s_0.
\end{equation}

We now proceed to test numerically the approximate implied volatility formula \eqref{eq:approxVol} for various definitions of the local volatility function $a(s,t)= s\,\sigma_\ell(s,t)$ in the local volatility model \eqref{eqn:local-vol}. In each case, to evaluate \eqref{eq:approxVol}, we need to compute the most-likely-path $\tilde s(t)$. To this end, we exploit the following iteration scheme.

\begin{lemma}\label{lem:recursion}
The most-likely path $\tilde s(t)$ satisfies the recursive formula
\begin{equation}
\tilde s(t) = s_0  +\frac{I(t)}{\bar I(T)}\,\left[ K-s_0\right]
\end{equation}
where
\begin{eqnarray*}
I(t)&=& \int_0^t \int_r^T\, a(\tilde s(r),r) a(\tilde s(u),u) e^{-\int_t^u \frac{a_t(\tilde s(v),v)}{a(\tilde s(v),v)}dv} \,du dr, \\
\bar I(T) &=& \frac 1T \int_0^T\,I(u)\,du.
\end{eqnarray*}
\end{lemma}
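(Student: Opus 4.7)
The plan is to reduce the Euler--Lagrange equation \eqref{ode} to a first-order linear ODE by a convenient substitution, solve it explicitly using the terminal condition $\dot s(T)=0$, integrate once more using $s(0)=s_0$, and finally fix the Lagrange multiplier $\lambda$ via the averaging constraint $\frac1T\int_0^T s(t)\,dt=K$.

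More precisely, introduce $u(t):=\dot{\tilde s}(t)/a(\tilde s(t),t)$. Then the Euler--Lagrange equation \eqref{ode} reads
\begin{equation}
\dot u(t) - \frac{a_t(\tilde s(t),t)}{a(\tilde s(t),t)}\,u(t) = -\frac{\lambda}{T}\,a(\tilde s(t),t),
\end{equation}
which is linear in $u$. Using the integrating factor $\exp\!\left(-\int_0^t \frac{a_t(\tilde s(v),v)}{a(\tilde s(v),v)}\,dv\right)$, multiplying through, integrating from $t$ to $T$, and applying the terminal condition $u(T)=0$ from \eqref{eqn:mlp-bdry-cond} gives
\begin{equation}
u(t)=\frac{\lambda}{T}\int_t^T a(\tilde s(u),u)\,\exp\!\left(-\int_t^u \frac{a_t(\tilde s(v),v)}{a(\tilde s(v),v)}\,dv\right)du.
\end{equation}
Multiplying by $a(\tilde s(t),t)$ recovers $\dot{\tilde s}(t)$.

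Next I would integrate $\dot{\tilde s}$ from $0$ to $t$ with the initial condition $\tilde s(0)=s_0$, which yields
\begin{equation}
\tilde s(t) = s_0 + \frac{\lambda}{T}\,I(t),
\end{equation}
with $I(t)$ the iterated integral appearing in the statement. Finally, imposing the averaging constraint $\frac1T\int_0^T \tilde s(t)\,dt = K$ gives
\begin{equation}
K - s_0 = \frac{\lambda}{T}\cdot\frac{1}{T}\int_0^T I(u)\,du = \frac{\lambda}{T}\,\bar I(T),
\end{equation}
so that $\lambda/T = (K-s_0)/\bar I(T)$. Substituting back yields the claimed representation $\tilde s(t)=s_0+\frac{I(t)}{\bar I(T)}(K-s_0)$.

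The steps are all elementary linear-ODE manipulations; the only real obstacle is bookkeeping the integrating factor carefully and correctly exchanging the order of integration when rewriting $\int_0^t \dot{\tilde s}(r)\,dr$ as a double integral over $\{0\le r\le t,\; r\le u\le T\}$. Note that the resulting ``formula'' is self-referential (the right-hand side depends on $\tilde s$ through $I$ and $\bar I$), which is exactly what makes it suitable as a fixed-point iteration scheme for the numerical computation of $\tilde s$; no additional argument beyond existence/uniqueness of the minimizer (already noted in the remark after \eqref{eqn:discrete-constr}) is needed to conclude.
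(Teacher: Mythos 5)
Your proof is correct and follows essentially the same route as the paper's: the substitution $u=\dot{\tilde s}/a$ turns the Euler--Lagrange equation into a first-order linear ODE, which is solved with the same integrating factor and terminal condition $\dot s(T)=0$, then integrated once more from $s(0)=s_0$, with the averaging constraint fixing $\lambda$. (One small remark: the second integration actually produces the exponent $-\int_r^u \frac{a_t(\tilde s(v),v)}{a(\tilde s(v),v)}\,dv$ with $r$ the outer dummy variable, rather than the $-\int_t^u$ printed in the lemma's definition of $I(t)$, which appears to be a typo in the statement that neither your derivation nor the paper's own proof would reproduce.)
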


\begin{proof}
From Lemma \ref{lem:2.2}, $\tilde s(t)$ satisfies the Euler-Lagrange equation \eqref{ode} which we reiterate in the following for convenience
\beq
 \frac{d}{dt}\left(\frac{\dot s}{a} \right) - \frac{a_t}{a^2} \dot s + \frac\lambda T a = 0  \label{eq:EL2}
\eeq
with boundary conditions
\begin{equation}
   s(0) = s_0, \quad \dot s(T) = 0, 
\end{equation}
where $\lambda$ is chosen such that $1/T\int_0^T s(t) dt = K$.

For ease of notation, define $\tilde a(t):=a\left(\tilde s(t),t\right)$ and $f(s,t) = a_t/a = \p_t \log a$. Also, $\tilde f(t) = f(\tilde s(t),t)$.
Applying the integrating factor $\exp\left(-\int_0^t \tilde f(u)du\right)$ and integrating \eqref{eq:EL2} with the boundary condition $\dot s(T) = 0$ gives
\beaa
&& -e^{-\int_0^t \tilde f(v)dv} \frac{\dot s(t)}{\tilde a(t)} = -\frac{\lambda}{T}\, \int_t^T\, \tilde a(u) e^{-\int_0^u \tilde f(v)dv} \,du. 
\eeaa
It follows that
\beaa
\frac{\dot s(t)}{\tilde a(t)} = \frac{\lambda}{T}\, \int_t^T\, \tilde a(u) e^{-\int_t^u \tilde f(v)dv} \,du.
\eeaa
Rearranging and integrating again gives
\begin{equation}
 s(t) - s_0 = \frac{\lambda}{T}\,I(t), 
\end{equation}
where $I(t) = \int_0^t \int_r^T\, \tilde a(r) \tilde a(u) \exp\left(-\int_t^u \tilde f(v)dv\right) \,du dr$.
Now apply the boundary condition $1/T\int_0^T s(t) dt = K$ to get
\begin{equation}
K -s_0 =  \frac{\lambda}{T}\,\bar I(T) 
\end{equation}
and the result follows.
\end{proof}

Lemma \ref{lem:recursion} leads to an efficient fixed-point algorithm for solving for the most-likely-path.  The natural choice of first guess is  the Bachelier most-likely-path \eqref{eqn: MLP bachelier}:
\begin{equation}
 s(t) = s_0  + 3(K - s_0) \frac tT - \frac{3(K - s_0)}2 \left( \frac tT \right)^2 .
\end{equation}
The resulting algorithm typically converges sufficiently after three or four iterations.

\subsection{A time-dependent CIR model}

We consider the model
  \begin{equation}
    dS_t = e^{-\lambda t} \sigma \sqrt{S_t} \,dW_t
  \end{equation}
with $S_0=1$, $\sigma = 0.2$ and $\lambda=1$.  Thus
\beq
a(s,t) = e^{-\lambda t} \sigma \sqrt{s}.
\label{eq:CIR}
\eeq

Though we computed a quasi-closed-form for the most-likely-path for the time-homogeneous case in Example \ref{ex:cir}, in this time-inhomogeneous case, we choose to compare our approximate implied volatility formula \eqref{eq:approxVol} evaluated using the fixed-point iteration algorithm against Monte Carlo simulations generated with 1000 time steps and 2 million sample paths.  The results are shown in Figure \ref{fig:CIR}. We remark that the most-likely-path approximation in this example slightly underestimates the normal implied volatility for Asian option inferred from simulation.  

\begin{figure}[ht!]
\begin{center}
\includegraphics[width=0.7 \linewidth]{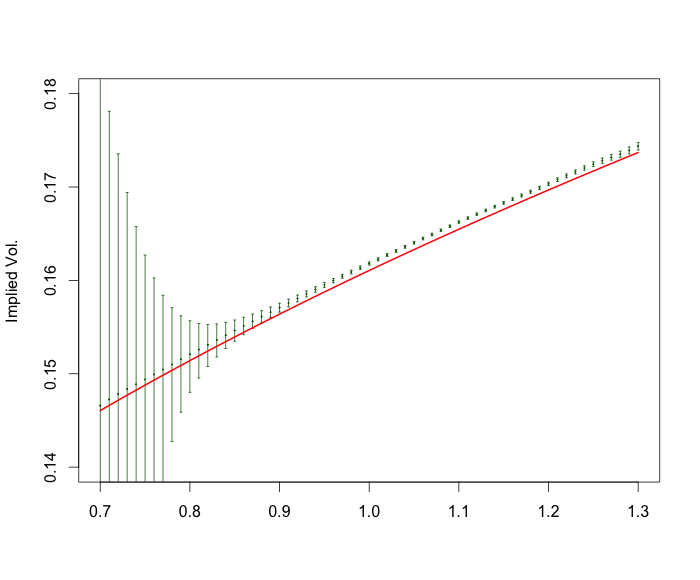}
\end{center}
\caption{The $1$-year Bachelier Asian implied volatility smile corresponding to the time-dependent CIR local volatility function \eqref{eq:CIR}.  The blue dotted line is from Monte Carlo simulation with error bars; the red solid line is the approximation $\sigma_{b,0}$. }
\label{fig:CIR}
\end{figure}

\subsection{Time-dependent quadratic local volatility}\label{sec:AQ}

Next we consider the following quadratic local volatility model:
  \beq
    dS_t = e^{-\lambda t} \sigma \left[ 1 + \psi (S_t - 1) + \frac\gamma2 (S_t - 1)^2 \right] dW_t
    \label{eq:CEVT}
  \eeq
with $\sigma=0.2$, $\psi=-0.5$, $\gamma=0.1$, and $\lambda=1$. We remark that though in this example the function $a$ grows quadratically to infinity as $|x| \to \infty$ which violates the linear growth condition \eqref{eqn: bounded} required for the theoretical argument, we did the numerical experiment for testing the applicability of the most-likely-path methodology.  

Though a closed-form solution for European options with these parameters is given in \cite{andersen}, we again resort to Monte Carlo simulation to estimate the value of Asian options in this model. Likewise, simulations are generated with 1000 time steps and 2 million sample paths. 
The results are shown in Figure \ref{fig:AQ}. Similarly, the most-likely-path approximation in this example also underestimates the normal implied volatility for Asian option inferred from simulation.

\begin{figure}[ht!]
\begin{center}
\includegraphics[width=0.7 \linewidth]{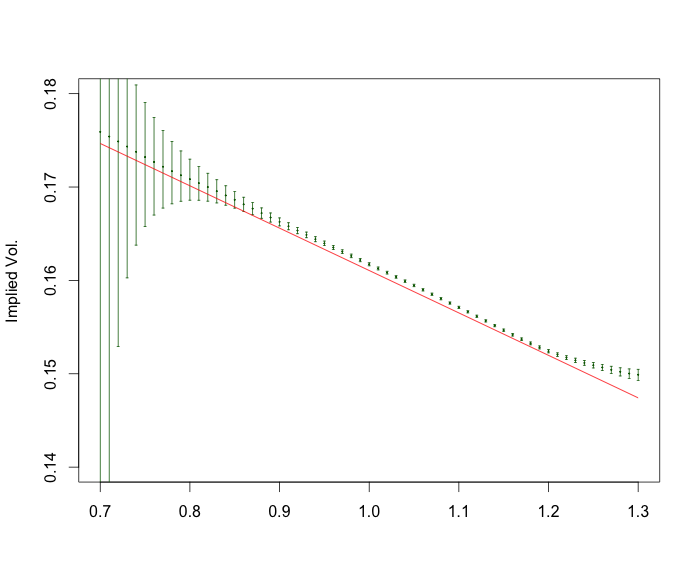}
\end{center}
\caption{The $1$-year Bachelier Asian implied volatility smile corresponding to the time-dependent quadratic local volatility function \eqref{eq:CEVT}.  The blue dotted line is from Monte Carlo simulation with error bars; the red solid line is the approximation $\sigma_{b,0}$. }
\label{fig:AQ}
\end{figure}

\subsection{Benchmark scenarios in Black-Scholes and CIR models}\label{sec:numerics-benchmark}
In this subsection, for the Black-Scholes and CIR models, we compare the most-likely-path approximation to a few existing approximations with the benchmark scenarios proposed in \cite{ge} and \cite{fmw} that were commonly used in the literatures on Asian option pricing, see for instance \cite{dassios}, \cite{dufresne}, \cite{fpp}, \cite{linetsky}, \cite{vx}.

With the approximation of Bachelier implied volatility $\sigma_{b,0}$ given in Theorem \ref{thm:imp-vol-approx}, we may calculate the approximate price of the Asian call struck at $K$ and expired at $T$ by  
\bea
e^{-rT} \left\{\sqrt{\frac v{2\pi}} e^{-\frac{(A - K)^2}{2v}} + (A - K) N\left( \frac{A - K}{\sqrt v} \right)\right\} \label{eqn:approx-Asian-call-price}
\eea
with
\beaa
&& A = \frac{S_0(e^{\mu T} - 1)}{\mu T}, \quad  
v = \frac{\sigma_{b,0}^2}{\mu^2 T^2} \left( \frac{3 - 4e^{\mu T} + e^{2\mu T}}{2\mu} + T\right) 
\eeaa
and $\mu = r - q$. Note that \eqref{eqn:approx-Asian-call-price} is indeed the price of the Asian call struck at $K$ and expired at $T$ under the Bachelier model 
\begin{equation}
dS_t = (r - q) S_t dt + \sigma dW_t
\end{equation}
with risk free rate $r$ and dividend rate $q$, both assumed constant. We remark that, since many of the benchmark scenarios are of ATM, the ATM approximate implied volatility is obtained by taking the limit of $\sigma_{b,0}$ as $K$ approaches $S_0$. Explicitly, the limits in the Black-Scholes and CIR cases are given by
\beaa
&& \lim_{K \to S_0} \sigma_{b,0} = \left\{\begin{array}{ll}
\sigma S_0 & \mbox{for the Black-Scholes model,} \\
& \\
\sigma \sqrt{S_0} & \mbox{for the CIR model.}
\end{array}\right.
\eeaa

Table \ref{tab:BS} exhibits the numerical results for the asymptotic approximation for the Asian options obtained from \eqref{eqn:approx-Asian-call-price} for the scenarios considered in \cite{fmw}. Based on the results in \cite{linetsky}, the relative discrepancies of the approximate prices are less than 1.5\% in all the seven benchmark scenarios and within 1\% for options expiring in a year. 
Table \ref{tab:CIR} shows the numerical tests for Asian option pricing in the CIR model for the seven scenarios proposed in \cite{dassios}. Data is quoted from Table 5 in \cite{fpp}. Based on the results in \cite{fpp}, the relative discrepancies of the approximate prices are less than 1\% in all the seven benchmark scenarios and within 0.6\% for options expiring in a year. We may thus arguably conclude that the short expiry approximation results of the current paper provide a reasonable approximation for Asian option prices in these models.
\begin{table}[]
\centering
\caption{Prices of Asian calls in the Black-Scholes model in the benchmark scenarios. The last four columns correspond to the approximate price from \eqref{eqn:approx-Asian-call-price} (ALW), the third order approximation from \cite{fpp} (FPP3), the precise evaluation in \cite{linetsky} (Linetsky), and the relative discrepancy of ALW to Linetsky.}
\label{tab:BS}
\begin{tabular}{|c|ccccc|cccc|}
\hline
Case & $S_0$ & $K$ & $r$ & $\sigma$ & $T$ & ALW & FPP3 & Linetsky & rel. discrp.\\ \hline \hline
1 & 2 & 2 & 0.02 & 0.1 & 1 & 0.056042 & 0.055986 & 0.055986 & 0.10\% \\ 
2 & 2 & 2 & 0.18 & 0.3 & 1 & 0.219607 & 0.218387 & 0.218387 & 0.56\% \\ 
3 & 2 & 2 & 0.0125 & 0.25 & 2 & 0.172939 & 0.172267 & 0.172269 & 0.39\% \\
4 & 1.9 & 2 & 0.05 & 0.5 & 1 & 0.195034 & 0.193164 & 0.193174 & 0.96\% \\
5 & 2 & 2 & 0.05 & 0.5 & 1 & 0.248277 & 0.246406 & 0.246416 & 0.73\% \\
6 & 2.1 & 2 & 0.05 & 0.5 & 1 & 0.308029 & 0.306210 & 0.306220 & 0.59\% \\
7 & 2 & 2 & 0.05 & 0.5 & 2 & 0.355167 & 0.350040 & 0.350095 & 1.45\% \\ \hline
\end{tabular}
\end{table}
\begin{table}
\centering
\caption{Prices of Asian calls in the CIR model in the benchmark scenarios considered in \cite{dassios}. The last four columns correspond to the approximate price from \eqref{eqn:approx-Asian-call-price} (ALW), from \cite{dassios} (DN), the third order approximation from \cite{fpp} (FPP3), and the relative discrepancy of ALW to FPP3.}
\label{tab:CIR}
\begin{tabular}{|c|ccccc|cccc|}
\hline
Case & $S_0$ & $K$ & $r$ & $\sigma$ & $T$ & ALW & DN & FPP3 & rel. discrp. \\ \hline \hline
1 & 2 & 2 & 0.02 & 0.14 & 1 & 0.055591 & 0.0197 & 0.055562 & 0.05\% \\
2 & 2 & 2 & 0.18 & 0.42 & 1 & 0.218521 & 0.2189 & 0.217874 & 0.30\% \\
3 & 2 & 2 & 0.0125 & 0.35 & 2 & 0.171331 & 0.1725 & 0.170926 & 0.24\% \\
4 & 1.9 & 2 & 0.05 & 0.69 & 1 & 0.191950 & 0.1902 & 0.190834 & 0.58\% \\
5 & 2 & 2 & 0.05 & 0.72 & 1 & 0.252333 & NA & 0.251121 & 0.48\% \\
6 & 2.1 & 2 & 0.05 & 0.72 & 1 & 0.309864 & 0.3098 & 0.308715 & 0.37\% \\
7 & 2 & 2 & 0.05 & 0.71 & 2 & 0.356411 & 0.3339 & 0.353197 & 0.91\% \\
\hline
\end{tabular}
\end{table}

%
%

\section{Conclusion}
We have derived a small time asymptotic of the price of discretely monitored Asian options up to first order. The most-likely-path approximation for continuously monitored Asian calls has been derived heuristically by taking the limit of the corresponding term from discretely monitored case  and proved rigorously using the theory of large deviation.
Numerical experiments in both time-dependent CIR and time-dependent quadratic local volatility models showed that there is room to improve the performance of the most-likely-path approximation. One possibility is to include higher order terms. Large deviation theory generically provides no insights beyond the term of exponential decay. On the other hand, it is conceivable that taking the limit of corresponding terms from the discretely monitored case may yield tractable expressions for numerical evaluations. The calculations of higher order terms are considerably more involved and henceforth were left to further work.

%
%

\section*{Acknowledgement}
We thank the anonymous referees for their careful reading and valuable comments.
We would like to express our gratitude to Jim Gatheral for numerical results and Dan Pirjol for valuable and interesting discussions, and sending us an early version of their papers. 
L.-P. A. is supported by NSF CAREER 1653602, NSF grant DMS-1513441, and a Eugene M. Lang Junior
Faculty Research Fellowship. N.-L. L. is supported by JSPS KAKENHI Grant Numbers 25780213.
T.-H. W. is partially supported by the Natural Science Foundation of China grant 11601018.

%
%
\section{Appendix - Convexity of the constrained optimization problem} \label{sec:appendix-convexity}

We analyze the convexity of the constrained optimization problem \eqref{eqn:discrete-obj}:\eqref{eqn:discrete-constr} in this appendix. Recall the Lamperti transformation $\varphi(s,t) := \int_{s_0}^s 1/a(\xi,t) d\xi$ and the objective function $D$
\begin{equation}
  D(\bs,\mt) = \frac12 \sum_{i=1}^n |\varphi(s_i,t_i) - \varphi(s_{i-1},t_{i-1})|^2.
\end{equation}
For notational simplicity, we shall write the function $a(\cdot,t_i)$ as $a_i(\cdot)$ and similarly $\varphi(\cdot,t_i)$ as $\varphi_i(\cdot)$. By straightforward calculations, the second partial derivatives of $D$ are given by
\beaa
  && \frac{\p^2 D}{\p s_i^2} = \frac2{a_i^2(s_i)} + \frac{a_i'(s_i)}{a_i^2(s_i)}[\varphi_{i-1}(s_{i-1})
  + \varphi_{i+1}(s_{i+1})-2\varphi_i(s_i)]; \\
  && \frac{\p^2 D}{\p s_n^2} = \frac1{a_n^2(s_n)} - \frac{a_n'(s_n)}{a_n^2(s_n)}[\varphi_{n-1}(s_n) - \varphi_{n-1}(s_{n-1})];  \\
  && \frac{\p^2 D}{\p s_i \p s_j} = \frac{-1}{a_i(s_i)a_j(s_j)}, \quad \mbox{ if } |i - j| = 1;   \\
  && \frac{\p^2 D}{\p s_i \p s_j} = 0, \quad \mbox{ if } |i - j| \geq 2 .
\eeaa
We decompose the Hessian matrix $H = \left[ \p_{s_i}\p_{s_j} D\right]$ as $H = H^1 + H^2$, where $H^1$ is the symmetric tridiagonal matrix with diagonal entries given by 
\beaa
&& H^1_{kk} = \frac2{a_k^2(s_k)} \quad \mbox{ for } k = 1, \cdots, n-1, \quad
H^1_{nn} = \frac1{a_n^2(s_n)}
\eeaa
and off-diagonal entries by  
\beaa
H^1_{k, k+1} = \frac{-1}{a_k(s_k) a_{k+1}(s_{k+1})} \quad \mbox{ for } k = 1, \cdots, n-1. 
\eeaa
$H^2$ is a diagonal matrix with diagonal entries given by 
\beaa
&& H^2_{kk} = \frac{a_k'(s_k)}{a_k^2(s_k)}[\varphi_{k-1}(s_{k-1}) + \varphi_{k+1}(s_{k+1}) - 2 \varphi_k(s_k)] \mbox{ for } k = 1, \cdots, n-1, \\
&& H^2_{nn} = - \frac{a_n'(s_n)}{a_n^2(s_n)}[\varphi_{n-1}(s_n) - \varphi_{n-1}(s_{n-1})].
\eeaa
We claim that $H^1$ is positively definite. Let $\hat H^1$ be the square submatrix of $H^1$ consisting of the first $n-1$ rows and columns of $H^1$.   We partition $H^1$ as
\beaa
H^1 = \left[\begin{array}{cc}
\hat H^1 & c^T \\
& \\
c & H^1_{nn}
\end{array}\right],
\eeaa
where $c = \left[ 0\, \cdots \, 0 \; H^1_{n-1, n} \right]$ is an $(n-1)$ row vector. Note that, by induction, one can show that the principle minors of $\hat H^1$ are $(k+1)\prod_{j=1}^k a_j^{-2}(s_j) > 0$,  for $k = 1, \cdots, n-1$. By applying the identity for determinant, if $A$ is invertible,  
\beaa
\det\left[\begin{array}{cc}
A & b^T \\
b & d
\end{array}\right] 
= \det(A) \times (d - b A^{-1} b^T), 
\eeaa
we calculate the determinant of $H^1$ as 
\beaa
\det(H^1) &=& \det(\hat H^1) \times \left\{H^1_{nn} - c(\hat H^1)^{-1}c^T\right\} \\
&=& \det(\hat H^1) H^1_{nn} - \left( H^1_{n-1,n} \right)^2 \det(\hat{\hat H}^1) \\
&=& \prod_{k=1}^n \frac1{a_k^2(s_k)} > 0,
\eeaa
where $\hat{\hat H}^1$ is the square submatrix of $H^1$ by deleting the last two rows and columns of $H^1$. 
Thus, by Sylvester's criterion $H^1$ is positive definite. 

Finally, since
\beaa
  && \varphi_{k-1}(s_{k-1}) + \varphi_{k+1}(s_{k+1}) - 2 \varphi_k(s_k) \nonumber \\
  && = \varphi_k'(s_k) (\Delta s_{k+1} + \Delta s_k) + o(\Delta s_{k+1}, \Delta s_k, \Delta t) \\
  && \varphi_{n-1}(s_n) - \varphi_{n-1}(s_{n-1}) = \varphi_{n-1}'(s_{n-1}) \Delta s_n + o(\Delta s_n),
\eeaa
where $\Delta s_k = s_k - s_{k-1}$, 
the entries of $H^2$ are small if the $s_i$'s are close to each other and $\Delta t$ is small. In this case, $H$ is positive definite. Hence, the objective function $D$ is convex.


\end{document}